\documentclass[letterpaper,11pt]{article}
\usepackage[T1]{fontenc}
\usepackage[utf8]{inputenc}
\usepackage{lmodern}
\usepackage{microtype}
\usepackage{complexity}
\usepackage{amsmath}
\usepackage{amssymb}
\usepackage{amsthm}
\usepackage[top=1in,bottom=1in,left=1in,right=1in]{geometry}
\usepackage[textsize=tiny]{todonotes}
\usepackage{graphicx}
\usepackage[export]{adjustbox}
\usepackage{wrapfig}
\usepackage{xspace}
\usepackage[unicode=true]{hyperref}
\usepackage[linesnumbered,ruled]{algorithm2e}
\usepackage[noend]{algorithmic}
\interfootnotelinepenalty=10000

\newtheorem{thm}{Theorem}
\newtheorem{lem}{Lemma}

\theoremstyle{definition}

\newcommand{\OTSP}{\textnormal{OTSP}\xspace}
\newcommand{\PTSP}{\textnormal{PTSP}\xspace}
\newcommand{\TSP}{\textnormal{TSP}\xspace}
\newcommand{\kTSPP}{\textnormal{$k$-TSPP}\xspace}

\newcommand{\drop}{\textnormal{drop}\xspace}

\title{Approximating Traveling Salesman Problems Using a Bridge Lemma}
\author{
    \begin{minipage}[t]{.3\textwidth}
        \centering
        Martin Böhm\thanks{Research supported by National Science Centre in Poland under grant SONATA 2022/47/D/ST6/02864.}\\
        University of Wroc{\l}aw
    \end{minipage}
    \begin{minipage}[t]{.3\textwidth}
        \centering
        Zachary Friggstad\thanks{Research supported by an NSERC Discovery Grant.}\\
        University of Alberta
    \end{minipage}\\[15mm]
    \begin{minipage}[t]{.3\textwidth}
        \centering
        Tobias Mömke\thanks{Partially supported by DFG Grant 439522729 (Heisenberg-Grant) and DFG Grant 439637648 (Sachbeihilfe).}\\
        University of Augsburg
    \end{minipage}
    \begin{minipage}[t]{.3\textwidth}
        \centering
        Joachim Spoerhase\thanks{Research partially supported by European Research Council (ERC) under the European Union’s Horizon 2020 research and innovation programme (grant agreement No. 759557).}\\
        University of Sheffield
    \end{minipage}\\[15mm]
}
\begin{document}
\maketitle
\thispagestyle{empty}
\begin{abstract}
We give improved approximations for two metric \textsc{Traveling Salesman Problem} (\TSP) variants. In \textsc{Ordered TSP} (\OTSP) we are given a linear ordering on a subset of nodes $o_1, \ldots, o_k$. The TSP solution must have that $o_{i+1}$ is visited at some point after $o_i$
for each $1 \leq i < k$. This is the special case of \textsc{Precedence-Constrained TSP} ($\PTSP$) in which the precedence constraints are given by a single chain on a subset of nodes. In \textsc{$k$-Person TSP Path} (\kTSPP), we are given pairs of nodes $(s_1,t_1), \ldots, (s_k,t_k)$. The goal is to find an $s_i$-$t_i$ path with minimum total cost such that every node is visited by at least one path.

We obtain a $3/2 + e^{-1} < 1.878$ approximation for \OTSP, the first improvement over a trivial $\alpha+1$ approximation where $\alpha$ is the current best TSP approximation. We also obtain a $1 + 2 \cdot e^{-1/2} < 2.214$ approximation for \kTSPP, the first improvement over a trivial $3$-approximation.

These algorithms both use an adaptation of the Bridge Lemma that was initially used to obtain improved \textsc{Steiner Tree} approximations [Byrka et al., 2013]. Roughly speaking, our variant states that the cost of a cheapest forest rooted at a given set of terminal nodes will decrease by a substantial amount if we randomly sample a set of non-terminal nodes to also become terminals such provided each non-terminal has a constant probability of being sampled. We believe this view of the Bridge Lemma will find further use for improved vehicle routing approximations beyond this paper.
\end{abstract}
\newpage
\setcounter{page}{1}

\section{Introduction}

In this work, we investigate two established variants of the quintessential \emph{metric traveling salesman problem}. As the name suggests, we are working in a metric space $(V,c)$ with a universe $V$ of $n$ vertices and a metric $c\colon V \times V \rightarrow \mathbb Q_{\geq 0}$, which we will also call a \emph{cost function}. We make use of the fact that a metric induces a complete graph $K_n$ with a cost $c(u,v)$ on each edge $\{u,v\}$, so that we can talk of paths and cycles in our metric space.

In the first problem, called \textsc{Ordered Traveling Salesman Problem} (OTSP), we are given a metric $(V,c)$ alongside a sequence of distinct nodes $o_1, o_2, \ldots, o_k \in V$. The goal is to find a minimum-cost Hamiltonian cycle such that a traversal of this cycle visits $o_{i+1}$ after $o_i$ for each $1 \leq i < k$.

We emphasize that vertices not present in $\{o_1, \ldots, o_k\}$ must be visited also, but their order is unrestricted, and that many other vertices can be visited between $o_i$ and $o_{i+1}$. 

In the second problem, called \textsc{$k$-Person TSP Path} (\kTSPP), we are presented with the metric $(V,c)$ and a list of $k$ vertex pairs $(s_1, t_1), \ldots, (s_k, t_k)$. The goal is to find one $s_i$-$t_i$ path for each $1 \leq i \leq k$ having minimum total cost such that each vertex of $V$ lies on at least one path.

By creating colocated copies of locations, we may assume these $2k$  endpoints are distinct and, by shortcutting, that each vertex lies on precisely one path.

\subsection{Previous Work}
\OTSP is a special case of \textsc{Priority TSP} (\PTSP) in which we are given a collection of precedence constraints $(u_1, v_1), (u_2, v_2), \ldots, (u_k, v_k)$. The goal is to find a minimum-cost Hamiltonian cycle such that some direction of traversal has $u_i$ being visited before $v_i$ for each $1 \leq i \leq k$. Charikar et al.~\cite{CMRS97} show that approximating $\PTSP$ with general partial orderings determined by the vertex pairs is hard.
More specifically, they show that it is $\NP$-hard to obtain a constant approximation and that a polylogarithmic approximation for $\PTSP$ would imply $\NP \subseteq \DTIME(n^{\log\log n})$. These results even hold if the graph is a path metric (i.e., the metric completion of a simple path).

The problem $\OTSP$ and thus $\PTSP$ contain standard (metric) \textsc{TSP} as
a subproblem, for which the best approximation ratio $\alpha$ is currently a constant slightly smaller than $1.5$ \cite{OG1,OG2}. This landmark result was the first improvement over the classic $1.5$-approximation
Christofides and Serdyukov approximation from the 1970s \cite{Chr76,Ser78}.
Prior to this paper, the best constant approximation ratio for $\OTSP$ is obtained by using an $\alpha$-approximation for \TSP as a black box.
Namely, compute a $\alpha$-approximate TSP solution skipping the vertices $\{o_1,\dotsc,o_k\}$ and combine the obtained cycle with the cycle $o_1,o_2,\dotsc,o_k,o_1$ to obtain an $(\alpha+1)$-approximate solution~\cite{BHKK06}.
Additionally, another sub-constant improvement was known that is slightly better than $\alpha + 1$ for modest values of $k$: there is a $(2.5 - 2/k)$-approximation algorithm for $\OTSP$~\cite{BMS13}.

$\OTSP$ was studied in the context of cost given by a generalized metric, where the triangle inequality can be slightly violated \cite{BHKK06,BMS13,BS14}.
There is a close relation between $\OTSP$ and vehicle routing problems with pickup and delivery~\cite{SS95,BCGL07} and with the dial-a-ride problem~\cite{CR98,GNR09,CL07}.

\PTSP has a long history of research and practical applications, cf.~\cite{BMRS94,Sal19} and references within.
There are, however, only few theoretical results.
The disparity can be partially explained by the strong hardness results obtained by Bhatida, Khuller and Naor~\cite{BKN00} for the closely related
load time scheduling problem (LTSP) and the subsequent hardness results for $\PTSP$ of Charikar et al.~\cite{CMRS97}, already mentioned in the beginning of the introduction.\footnote{The authors presented a preliminary version of the result in a short talk at the Aussois Combinatorial Optimization Workshop in January 2024, aiming to resolve a final step of the proof. The goal was achieved during the workshop by adding Theorem~\ref{thm:bj}, obtaining a better-than-2.5 approximation for $\OTSP$.
After obtaining all remaining results, the authors learned about the result~\cite{AMN24} on \OTSP.}

Our results are based on applying an iterative linear programming technique.
It is therefore reasonable to check, whether other iterated methods could work, based on the framework of Jain~\cite{Jai01}
and the subsequent development~\cite{LRS11}.
Applying these techniques requires a linear programming formulation based on skew-supermodular cut functions.
While it is not hard to find relaxations for $\OTSP$ based on cut functions, obtaining a formulation where these functions are skew-supermodular is an open problem.
We can therefore also not directly apply well-known primal-dual frameworks~\cite{GW95,WGMV95}.

Regarding \kTSPP, a simple 3-approximation is known by finding the cheapest forest where each component contains one of the $2k$ endpoints (by contracting all $2k$ endpoints and computing a minimum spanning tree), doubling the edges of this forest, adding the edges $\{(s_i,t_i) : 1 \leq i \leq k\}$ and shortcutting the resulting Eulerian $s_i$-$t_i$ walks. To the best of our knowledge, this algorithm is folklore.
The special case $k = 1$, i.e., \textsc{TSP Path} (TSPP), is very well-studied. Recently, it was shown to that an $\alpha$-approximation for \TSP yields an $(\alpha + \epsilon)$-approximation for (TSPP) \cite{PTSP22}.

When $k > 1$, the closest work is on the case $s_i = t_i$, i.e. we are given $k$ root nodes and each cycle must be rooted at one of the nodes. The 3-approximation for \kTSPP  described above can be adapted easily to get a 2-approximation for this special case which is also the best known in general. Xu and Rodrigues obtain a $1.5$-approximation, matching the approximation guarantee of the Christofides/Serdyukov algorithm, for this special case running in time $n^{O(k)}$ \cite{Xu10}. More recently, a $(1.5+\epsilon)$-approximation for this variant was given by Deppert, Kaul, and Mnich \cite{Depp23} with an FPT-like running time on $k$, i.e. $O((1/\epsilon)^{O(k \log k)} \cdot {\rm poly}(n))$. It is conceivable that ideas in \cite{Xu10,Depp23} could lead to a better-than-3 approximation for \kTSPP in the case where $k$ is a constant, but this is not our focus.

Finally, Giannako et al.~\cite{Gian17} study the variant of \kTSPP where only the start locations are given (the $k$ paths may end anywhere) and obtain various non-trivial constant factor approximations in the case where $G$ is the shortest path metric of an undirected, cubic, and 2-vertex connected graph as well as other similar results for certain graph classes. We remark this version (given start points) has a trivial $2$-approximation by simply computing the minimum-cost spanning forest having each component rooted in a start node and then doubling the corresponding trees to one path (or one cycle) starting at each start node.


\subsection{Our Results}
We obtain improved approximations for both \OTSP and \kTSPP using a randomized LP rounding approach for both. For \kTSPP we consider writing an LP relaxation that models one unit of $s_i$-$t_i$ flow while ensuring each node in $G$ supports at least one unit of flow in total. Using a decomposition result of Bang-Jensen, we sample an $s_i$-$t_i$ path for each $1 \leq i \leq k$ such that every node in $G$ has a constant probability of being covered. Our key insight is then to use an adaptation of the bridge lemma of Byrka et al.~\cite{BGRS13} used in improved Steiner Tree approximations to show that the cost of attaching the remaining nodes using a doubled minimum-cost forest rooted at the already-covered nodes is, in expectation, a constant factor smaller than the optimum solution cost: small enough that even doubling the forest to graft in the remaining nodes to get the final paths still yields an improved approximation.
\begin{thm}\label{thm:ktspp}
There is a polynomial-time randomized algorithm for \kTSPP that finds a solution whose expected cost is at most $1 + 2 \cdot e^{-1/2} < 2.214$ times the optimum solution cost.
\end{thm}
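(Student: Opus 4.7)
The plan is to combine a natural LP relaxation with randomized flow decomposition and a \kTSPP-specific adaptation of the Byrka et al.\ bridge lemma. I first write an LP with edge variables $x^i_e$ for every pair $i$, enforcing that $x^i$ is an $s_i$-$t_i$ flow of value $1$ and that the aggregate flow incident to each non-endpoint vertex, summed over the $k$ pairs, meets a threshold tuned so that each vertex is covered with the required probability after sampling. Any feasible integral solution to \kTSPP induces such a point of equal cost, so the LP value is a lower bound on \OPT; I solve the LP and keep the optimal fractional flows $x^1, \ldots, x^k$.

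I then apply Bang-Jensen's decomposition to express each $x^i$ as a convex combination of integral $s_i$-$t_i$ paths and, for every pair independently, sample a path $P_i$ from this distribution. By linearity, the sampled paths have expected total cost equal to the LP value, hence at most \OPT. Using the elementary inequality $\prod_i (1 - a_i) \leq \exp(-\sum_i a_i)$ together with the LP's coverage constraint, I argue that each non-endpoint vertex is contained in at least one of the sampled paths with probability at least $1 - e^{-1/2}$.

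The technical heart is the adapted bridge lemma. Let $S$ denote the random set of vertices that are either endpoints or covered by some $P_i$, and let $F$ be the cheapest forest in the metric whose every connected component contains a vertex of $S$. I would fix an optimal \kTSPP solution $O$ and, for each realization of $S$, exhibit an $S$-rooted sub-forest of $O$ obtained by removing a well-chosen set of edges whose deletion disconnects $O$ into components each intersecting $S$. Averaging over the randomness of $S$, the expected cost of the surviving edges is bounded by the edge cost of $O$ weighted by the probability that an edge fails to be ``absorbed'' by $S$; since every non-endpoint has probability at least $1 - e^{-1/2}$ of being in $S$, this probability is at most $e^{-1/2}$, giving $\expect[c(F)] \leq e^{-1/2} \cdot \OPT$.

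Finally, for each component of $F$ I identify the unique sampled path $P_j$ it attaches to through a vertex in $S$, double the component's edges to obtain an Eulerian walk rooted at that attachment vertex, splice the walk into $P_j$, and shortcut to a simple $s_j$-$t_j$ path. The resulting $k$ paths visit every vertex and have total expected cost
\[
\expect\left[\sum_i c(P_i)\right] + 2\cdot \expect[c(F)] \leq \OPT + 2 e^{-1/2} \cdot \OPT = (1 + 2 e^{-1/2}) \cdot \OPT.
\]
The main obstacle is the third step: designing the coupling in the bridge lemma carefully enough that the per-edge survival probability really comes out to $e^{-1/2}$ (and not some weaker constant), in a setting where the root set is heterogeneous---part deterministic endpoints, part randomly covered interior vertices---and where shared attachment points across pairs must be handled without double-counting.
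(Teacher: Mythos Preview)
There is a genuine gap at the decomposition step. Bang-Jensen's theorem does \emph{not} decompose a fractional $s_i$--$t_i$ flow into integral $s_i$--$t_i$ paths; it decomposes an $s_i$-preflow into $s_i$-rooted \emph{branchings} (out-trees). No decomposition of the kind you need---$s_i$--$t_i$ paths with weights summing to~$1$, each arc used with weight at most $x^i_a$, and each vertex $v$ covered with weight at least $z_{i,v}$---can exist in general. Specialize to $k=1$: the natural LP forces $z_{1,v}=1$ for every non-endpoint, so every path in your decomposition would have to be Hamiltonian; sampling one would then yield an integral $s_1$--$t_1$ Hamiltonian path of expected cost at most the LP value, contradicting the known integrality gap of the Held--Karp relaxation for \textsc{TSP Path}. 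So the sentence ``the sampled paths have expected total cost equal to the LP value, hence at most $\OPT$'' cannot coexist with the coverage guarantee you use immediately afterward. Relatedly, the $e^{-1/2}$ coverage probability looks reverse-engineered from the target: with the natural constraint $\sum_i z_{i,v}\ge 1$ and independent sampling one gets non-coverage probability at most $e^{-1}$, not $e^{-1/2}$.

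What the paper actually does is sample a branching $B_i$ for each pair and convert it to an $s_i$--$t_i$ path by doubling all edges off the $s_i$--$t_i$ path in $B_i$; this costs at most $2c(B_i)-c(s_i,t_i)$ rather than $c(B_i)$, and that doubling is precisely what your accounting omits. To absorb it, the paper introduces a second layer of randomization: with probability $\gamma$ use the doubled branching, otherwise take the single edge $s_it_i$. This gives expected total path cost $(1-\tau+2\gamma\tau)\cdot OPT_{LP}$, where $(1-\tau)\cdot OPT_{LP}=\sum_i c(s_i,t_i)$, while the bridge lemma bounds the doubled-forest cost by $2e^{-\gamma}\cdot OPT_{LP}$. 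Setting $\gamma=\min\{1,\ln\tau^{-1}\}$ and taking the worst $\tau\in[0,1]$ yields $1+2e^{-1/2}$, attained at $\tau=e^{-1/2}$. The constant $e^{-1/2}$ thus arises from balancing the doubling loss against the forest savings, not from the raw coverage probability.
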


Our \OTSP improvement is similar in that we view the problem as a special case of \kTSPP where the pairs are given as $\{(o_i,o_{i+1}) : 1 \leq i < k\} \cup \{(o_k, o_1)\}$. Specific properties of this special case yield even better approximations.
\begin{thm}\label{thm:otsp}
There is a polynomial-time randomized algorithm for \OTSP that finds a solution whose expected cost is at most $(3/2 + e^{-1}) < 1.868$ times the optimum solution cost.
\end{thm}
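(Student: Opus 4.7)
The plan extends the LP-rounding approach from \kTSPP (Theorem~\ref{thm:ktspp}) while exploiting structure specific to \OTSP. Viewing \OTSP as \kTSPP with cyclic pairs $(o_1,o_2), (o_2,o_3), \ldots, (o_k,o_1)$, an integral decomposition of the natural fractional flow concatenates into a \emph{single} closed walk traversing the terminals in order---a property unavailable for generic \kTSPP. First, I would formulate an LP requiring one unit of $o_i$-to-$o_{i+1}$ flow for each consecutive pair, with edge variables $x_e$ upper-bounding the total flow on each edge and constraints ensuring every non-terminal receives sufficient support. The LP optimum $\opt_{\text{LP}}$ lower-bounds the \OTSP optimum.

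Next, I would apply the Bang-Jensen decomposition (Theorem~\ref{thm:bj}) to extract integer $o_i$-$o_{i+1}$ paths $P_1, \ldots, P_k$ whose concatenation forms a closed walk $W$ respecting the required ordering. Each non-terminal vertex is visited by $W$ with a probability lower-bounded by its LP flow support. For each non-terminal not yet visited by $W$, I would sample it independently with some probability $p$ (to be optimized) and add it to a root set together with the already-visited vertices and the $o_i$'s. The adaptation of the Bridge Lemma described in the abstract would then bound the expected cost of a minimum-cost forest $F$ attaching the remaining non-terminals to this root set, roughly as $\expect[c(F)] \leq e^{-1} \cdot \opt_{\text{LP}}$ at the optimized $p$.

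Finally, I would combine $W$ with the edges of $F$ (using doubling or a T-join/matching to correct parity) and shortcut the resulting Eulerian structure to a Hamiltonian cycle respecting the precedence order. The $3/2$ coefficient should arise from a Christofides-like bound of $3/2 \cdot \opt$ on the closed walk together with its parity-correction step, while the $e^{-1}$ coefficient comes from the Bridge Lemma contribution. The main obstacle will be tightly balancing these contributions: the analysis must simultaneously establish that the walk-plus-parity component costs at most $3/2 \cdot \opt$, that $\expect[c(F)] \leq e^{-1} \cdot \opt$, and that shortcutting $W \cup F$ preserves the cyclic order on $o_1, \ldots, o_k$. The improvement over \kTSPP's $1 + 2e^{-1/2}$ bound stems precisely from the cyclic connectivity of \OTSP's pairs, which permits a single connected forest attachment rather than $k$ disconnected doublings, and which allows the cheaper parity-correction analysis to replace a naive doubling of $F$.
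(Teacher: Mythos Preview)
Your high-level architecture matches the paper's (LP with cyclic pairs, Bang-Jensen sampling, Bridge Lemma for the forest, $T$-join parity correction), but two steps are genuinely off. First, the extra ``sample each unvisited non-terminal independently with probability $p$ and add it to the root set'' step is both unnecessary and wrong: the root set $T'$ in Theorem~\ref{thm:bridge} must consist of nodes already spanned by the partial solution, so throwing uncovered vertices into $T'$ leaves them stranded. The paper simply takes $T'=\bigcup_i V(B_i)$; the $e^{-1}$ arises directly because ${\bf Pr}[v\notin T']\le\prod_i(1-z_{i,v})\le(1-1/k)^k\le e^{-1}$ via AM--GM, with no additional randomization or parameter $p$ to optimize. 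Relatedly, the paper samples \emph{branchings} $B_i$ and keeps them as trees rather than extracting $o_i$-$o_{i+1}$ paths; turning each branching into a path either throws away covered nodes (weakening the $e^{-1}$ bound) or costs up to $2c(B_i)-c(o_io_{i+1})$ (breaking the budget).

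Second, your attribution of the $3/2$ to ``a Christofides-like bound on the closed walk together with its parity-correction step'' is not how the arithmetic works. The paper's accounting is: branchings cost $\le OPT_{LP}$, forest costs $\le e^{-1}\cdot OPT_{LP}$, and a \emph{single} $\mathcal O$-join on the odd-degree nodes of the entire multigraph $\bigcup_i E(B_i)\cup F_{T'}$ costs $\le \tfrac{1}{2}OPT_{LP}$, giving $1+e^{-1}+\tfrac{1}{2}$. The substantive piece you are missing is Lemma~\ref{lem:parity}: one must prove that $\hat x/2$, with $\hat x_{uv}=\sum_i(x_{i,(u,v)}+x_{i,(v,u)})$, is a feasible fractional $\mathcal O$-join, i.e., $\hat x(\delta(S))\ge 2$ for every cut $S$. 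This requires a case analysis exploiting the cyclic structure (a cut separating terminals is both exited by some $x_i$ and entered by some $x_j$) together with the LP coverage constraints for cuts not separating terminals. Without this argument there is no justification that parity correction costs $\tfrac{1}{2}OPT_{LP}$, and the claimed $3/2+e^{-1}$ does not follow.
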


As mentioned earlier, a key part of our improvements comes from an adaptation of the Bridge Lemma from \cite{BGRS13}. In particular, we show that the following holds.
For a (not necessarily metric) graph $G = (V,E)$ with edge costs $c_e \geq 0$, and a subset $T \subseteq V$ of nodes, let $c_T$ denote the cheapest spanning forest such that each component contains exactly one node in $T$: call this a $T$-rooted spanning forest.

\begin{thm}\label{thm:bridge}
Let $\mu : 2^V \rightarrow \mathbb{Q}_{\geq 0}$ be a probability distribution over subsets of $V \setminus T$ and let $\gamma \geq 0$ be such that for any $v \in V\setminus T, {\bf Pr}_{S \sim \mu}[v \notin S] \leq \gamma$. Then ${\bf E}[c_{T \cup S}] \leq \gamma \cdot c_T$.
\end{thm}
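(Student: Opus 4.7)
My plan is to exhibit an explicit (and likely suboptimal) $T \cup S$-rooted spanning forest built from an optimal $T$-rooted one by deleting a single edge per sampled vertex, and then analyze the expected savings through marginal probabilities. First I would fix an optimal $T$-rooted spanning forest $F^*$ of cost $c_T$ and orient it: each tree of $F^*$ contains exactly one vertex of $T$, so rooting it there associates to every non-terminal $v \in V \setminus T$ a unique parent edge $e_v$. In particular, $c_T = \sum_{v \in V \setminus T} c(e_v)$.

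Given a realization $S \subseteq V \setminus T$, I would define $F_S := F^* \setminus \{e_s : s \in S\}$ and argue that it is a feasible $T \cup S$-rooted spanning forest. The key verification is that each component of $F_S$ contains exactly one vertex of $T \cup S$: every $s \in S$ has lost its parent edge and is therefore the root of its own component, whereas each non-sampled vertex $v$ retains its parent edge, so the root of $v$'s component in $F_S$ is the first vertex of $T \cup S$ encountered when walking from $v$ toward its original terminal in $F^*$ (the original terminal itself, if no ancestor of $v$ was sampled). By feasibility,
\[
c_{T \cup S} \;\le\; c(F_S) \;=\; c_T - \sum_{s \in S} c(e_s).
\]

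Taking expectation over $S \sim \mu$, applying linearity, and invoking the hypothesis $\Pr[v \notin S] \le \gamma$ then gives
\[
\mathbb{E}[c_{T \cup S}] \;\le\; c_T - \sum_{v \in V \setminus T} \Pr[v \in S] \cdot c(e_v) \;\le\; c_T - (1 - \gamma) \sum_{v \in V \setminus T} c(e_v) \;=\; \gamma \cdot c_T,
\]
as required. I do not anticipate any substantive obstacle; the only subtlety is the "exactly one $(T \cup S)$-vertex per component" check, which the orientation of $F^*$ above handles cleanly. Notably, the argument uses nothing about the distribution $\mu$ beyond the single-vertex marginal bound, which is exactly why it is robust enough to plug into a variety of LP-rounding schemes.
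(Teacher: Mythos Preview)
Your argument is correct and considerably more elementary than the route taken in the paper. The paper first establishes a fractional-covering inequality (their Lemma~\ref{lem:support}): for any $(\mathcal S,z)$ with $\sum_{S \ni v} z_S \ge 1$ one has $c_T \le \sum_{S} \drop_T(S)\, z_S$. Their proof of that lemma introduces an auxiliary ``bridge'' cost $c'(u,v)$ (the most expensive edge on the $u$--$v$ path in a fixed optimal $T$-rooted forest), builds for each $S$ a directed forest $F'_S$ with $c'(F'_S)=\drop_T(S)$, and then appeals to integrality of the bidirected cut polyhedron for $T$-rooted directed spanning forests (Lemma~\ref{lem:mcdsf}). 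Theorem~\ref{thm:bridge} is then deduced by taking $z_S=\tfrac{1}{1-\gamma}\Pr[S]$ and rearranging.

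You bypass all of this: fixing the same optimal $T$-rooted forest $F^*$ and orienting it toward $T$, you simply delete the parent edge $e_s$ of every sampled $s\in S$. The resulting $F_S$ is a feasible $(T\cup S)$-rooted spanning forest (the ``exactly one root per component'' check is exactly the nearest-ancestor argument you sketch), so $c_{T\cup S}\le c_T-\sum_{s\in S} c(e_s)$, and linearity of expectation together with $\sum_{v\in V\setminus T} c(e_v)=c_T$ finishes. In fact the same inequality $\drop_T(S)\ge \sum_{s\in S} c(e_s)$ already yields Lemma~\ref{lem:support} by swapping the order of summation, so your approach is not weaker in scope. What the paper's route buys is the exact identity $\drop_T(S)=c'(F'_S)$ via bridges (faithful to the original Byrka et al.\ formulation for Steiner tree), but for the statement at hand that precision is unnecessary and your parent-edge deletion is the cleaner proof.
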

Intuitively, if the sampling procedure has each $v \in V\setminus T$ being included with constant probability, then sampling a single set $S$ and adding it to $T$ causes the minimum $T$-rooted spanning forest cost to drop by a constant factor in expectation.


\subsection{Overview of our Algorithm}
For brevity, we let $T = \{s_1, t_1, s_2, t_2, \ldots, s_k, t_k\}$ be the set of all endpoints of paths. Recall that we may assume all endpoints are distinct, so $|T| = 2k$.

We consider a linear programming relaxation for \kTSPP, namely \eqref{lp:ktspp} defined below. First, let $G'$ denote the bidirected version of $G$. That is, $G'$ is a complete directed graph where for each undirected edge $e = \{u,v\}$ we have both arcs $(u,v)$ and $(v,u)$ appearing in $G'$ whose costs are the same as $e$. For an arc $a = (u,v)$, we let $c(a)$ denote the cost of the underlying edge $\{u,v\}$. 

The LP relaxation has a variable $x_{i,a}$ for each $i \in [k]$ and each arc $a$ indicating if the $s_i$-$t_i$ path traverses the corresponding edge in the direction of $a$ plus a variable $z_{i,v}$ for each $i \in [k]$ and each $v \in V\setminus T$ indicating $v$ will be covered by path $i$. See the next section for descriptions of some notation.

\begin{alignat}{2}
    {\bf minimize}:\quad \sum_{i \in [k]} \sum_{a \in A(G')} c(a) \cdot x_{i,a}&&&\tag{{\bf LP-kTSPP}}\label{lp:ktspp}\\
    {\bf s.t.}\quad x_i(\delta^{out}(s_i)) &= x_i(\delta^{in}(t_i)) = 1&&\qquad\forall~ i \in [k]\nonumber\\
              x_i(\delta^{in}(s_i)) &= x_i(\delta^{out}(t_i)) = 0 &&\qquad\forall~ i \in [k]\nonumber\\
              x_i(\delta^{in}(v)) &= x_i(\delta^{out}(v)) = z_{i,v} &&\qquad\forall~i \in [k], v \in V\setminus T\nonumber\\
              x_i(\delta^{in}(U)) &\ge z_{i,v}&&\qquad\forall~ i \in [k], v \in V\setminus T,  \{v\} \subseteq U \subseteq V\setminus \{s_i\} \nonumber\\
             \sum_{i \in [k]} z_{i,v} & = 1 && \qquad\forall~ v\in V\setminus T\nonumber\\
              x, z &\ge 0 \nonumber
\end{alignat}

To round this LP, we use a result by Bang-Jensen et al.\ about decomposing preflows into branchings. Namely, for a directed graph $D = (V, A)$ with a designated source node $r$, an $r$-preflow is an assignment $x : A \rightarrow \mathbb{Q}_{\geq 0}$ such that $x(\delta^{in}(v)) \geq x(\delta^{out}(v))$ for all $v \in V \setminus \{r\}$.  An $r$-branching in $D$ is directed tree oriented away from the root node $r$ that does not necessarily span all nodes.

\begin{thm}[Bang-Jensen, Frank, and Jackson \cite{BJ95}; Swamy and Post \cite{Post15}]\label{thm:bj}
Let $D = (V, A)$ be a directed graph and $x \in \mathbb R_{\geq 0}^A$ be an $r$-preflow for some $r \in V$. For $v \in V\setminus \{r\}$, let $z_v := \min_{\{v\} \subseteq S \subseteq V\setminus \{r\}} x(\delta^{in}(S))$ be the $r$-$v$ connectivity in $D$ under capacities $x$. We can find $r$-branchings $B_1, \ldots, B_q$
and associated weights $\lambda_1, \ldots, \lambda_q \geq 0$ such that $\sum_i \lambda_i = 1$, $\sum_{i : a \in A(B_i)} \lambda_i \leq x_a$ for each $a \in A$, and $\sum_{i : v \in V(B_i)} \lambda_i \geq z_v$ for each $v \in V$. Moreover, such a decomposition can be computed in time that is polynomial in $|V|$ and the number of bits used to represent $x$.
\end{thm}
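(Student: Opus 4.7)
The plan is to reduce the preflow setting to a flow setting and then apply a layered version of the classical fractional Edmonds arborescence packing theorem. First, I would replace $x$ with an $r$-$t$ flow by introducing a super-sink $t$ and adding, for each $v \in V \setminus \{r\}$ with excess $e_v := x(\delta^{in}(v)) - x(\delta^{out}(v)) > 0$, a new arc $(v,t)$ of capacity $e_v$. In the augmented digraph $D^+$, the extended vector is an $r$-$t$ flow; moreover, for every $v \in V \setminus \{r\}$, a minimum $r$-$v$ cut in $D^+$ can always be chosen with $t$ on the same side as $r$ (since the new arcs only enter $t$), so the connectivities $z_v$ are preserved. It thus suffices to prove the theorem when $x$ is a flow.

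\textbf{Layered packing.} By definition of $z_v$, we have the cut condition $x(\delta^{in}(S)) \geq \max_{v \in S} z_v$ for every $\emptyset \neq S \subseteq V \setminus \{r\}$. Let $0 =: \zeta_0 < \zeta_1 < \ldots < \zeta_p \leq 1$ be the distinct positive values of $z_v$, and let $U_j = \{v : z_v \geq \zeta_j\}$, so that $U_1 \supseteq \ldots \supseteq U_p$. Iteratively for $j = 1, \ldots, p$, I would apply the fractional Edmonds arborescence packing theorem to the current residual capacity $x^{(j-1)}$ (with $x^{(0)} := x$) to extract a convex combination of $r$-branchings each spanning $U_j$, with total weight $\zeta_j - \zeta_{j-1}$, and subtract the used capacity. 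Finally, I would pad with weight $1 - \zeta_p$ on the trivial branching $\{r\}$ to ensure $\sum_i \lambda_i = 1$. A vertex $v$ with $z_v = \zeta_j$ lies in every branching of layers $1, \ldots, j$ (since $v \in U_1, \ldots, U_j$), giving total coverage $\sum_{k=1}^j (\zeta_k - \zeta_{k-1}) = \zeta_j$, as required.

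\textbf{Main obstacle.} The delicate point is verifying that at each layer $j$ the residual satisfies the Edmonds cut condition $x^{(j-1)}(\delta^{in}(S)) \geq \zeta_j - \zeta_{j-1}$ for every $S$ with $S \cap U_j \neq \emptyset$. The original capacity satisfies $x(\delta^{in}(S)) \geq \zeta_j$ for such $S$, so there is enough capacity in total. However, branchings extracted in earlier layers may cross $S$ with more than one arc, potentially removing strictly more than $\zeta_{j-1}$ from $x(\delta^{in}(S))$. Controlling this requires choosing the per-layer branchings so that every tight cut is entered by the minimum possible weight; this is where the matroidal structure of arborescences enters via an uncrossing argument on the family of tight cuts. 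I expect this to be the most technical step, and the cleanest route is to invoke the machinery of \cite{BJ95} or \cite{Post15}, which handle it through matroid intersection and primal-dual analysis respectively. The polynomial-time implementation then follows from standard matroid-intersection or min-cost flow subroutines in each layer, with the number of extracted branchings $q$ polynomially bounded since each iteration either fixes one tight cut or decreases the number of distinct residual $z$-values.
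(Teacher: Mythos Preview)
The paper does not give its own proof of this theorem. It is quoted as a black-box result from the literature, with the single sentence ``Bang-Jensen et al.\ proved the structural result for integer preflows \cite{BJ95}, Swamy and Post showed how to do this efficiently for rational preflows \cite{Post15}'' immediately following the statement. So there is nothing in the paper to compare your argument against.

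On your sketch itself: the reduction from a preflow to a flow via a super-sink is clean and correct (restricting a branching in $D^+$ back to $D$ just deletes the leaf $t$), and the layered-packing outline is the natural way one would try to approach the result. You have also correctly isolated where the real work lies. The obstacle you flag in your third paragraph is genuine and is not a technicality: a branching extracted in an earlier layer $j' < j$ spans $U_{j'} \supseteq U_j$ and therefore must enter any $S$ with $S \cap U_j \neq \emptyset$, but it may do so through several arcs, so naive greedy layering can drain strictly more than $\zeta_{j-1}$ from $x(\delta^{in}(S))$ and the residual Edmonds condition can fail. Overcoming this is exactly the content of \cite{BJ95}; it does not fall out of a routine uncrossing argument, and your final paragraph effectively defers the heart of the proof back to the cited references. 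That is consistent with what the paper itself does, but you should be aware that what you have written is an outline that correctly locates the crux rather than a proof that carries it out.
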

Bang-Jensen et al. proved the structural result for integer preflows \cite{BJ95}, Swamy and Post showed how to do this efficiently for rational preflows \cite{Post15}.

Consider an optimal extreme point solution $(x,z)$ to \eqref{lp:ktspp}, in particular every entry in $x$ and $z$ is a rational value with bit complexity being polynomial in the input size.
For any $i$, we have that $x_i$ is an $s_i$-preflow and every $v \in V\setminus T$ is at least $z_{i,v}$-connected from $s_i$ in the directed graph with edge capacities given by $x_i$.
So we can decompose $x_i$ into a convex combination of trees such that each $v \in V\setminus T$ lies on a $z_{i,v}$-fraction of these trees.

Our final algorithm for \kTSPP will flip a coin with some bias for each $i$: in one case it will then sample a tree from this decomposition of $x_i$ and turn it into an $s_i$-$t_i$ path by doubling some edges, and in another case it will just pick the direct $s_i$-$t_i$ edge. After shortcutting paths so each node in $V\setminus T$ lies on at most one path, our final task is to graft in the remaining nodes that were not covered by any tree. If we denote the set of nodes covered collectively by all paths $S$, this algorithm so far can be regarded as having sampled a subset of nodes $S$ which are now ``covered'' and the remaining nodes can be grafted in anywhere. Theorem \ref{thm:bridge} is then used to show the cost of grafting in the remaining nodes is, in expectation, small compared to $OPT$.

To optimize the constants, our algorithm is sensitive to the difference between the optimum LP solution cost and $\sum_{i \in [k]} c(s_it_i)$. This is similar to the notion of {\em regret} or {\em excess} that has been studied in other vehicle routing problems (i.e. how much the solution costs in excess of the the direct path between endpoints). The choice of bias for the coins mentioned above depends on this gap.

The relaxation \eqref{lp:ktspp} can be viewed as a relaxation for \OTSP if we use pairs $(s_i,t_i) := (o_i,o_{i+1})$ for $1 \leq i < k$ and $(s_k,t_k) := (o_k,o_1)$.
Our algorithm uses essentially the same idea, except we always sample a tree for each pair and we do not turn the tree into a path just yet. We are left with a collection of trees whose union contains a cycle passing through $o_1, o_2, \ldots, o_k$ in the required order. Again, Theorem \ref{thm:bridge} allows us to cheaply graft in the remaining nodes (with expected cost being a constant-factor smaller than the optimum cost). This time, we do parity correction by 
optimally pairing the odd-degree nodes as in the Christofides/Serdyukov algorithm for \TSP. A more careful shortcutting argument to turn the resulting Eulerian circuit into a feasible \OTSP argument completes the algorithm. In fact, we present our \OTSP algorithm first since it serves as a good warm up for our \kTSPP algorithm.

\subsection{Notation and Preliminaries}\label{sec:preliminaries}
Throughout the paper, we use the following notation. We define $[k] := \{1,2,\dotsc,k\}$.
For a graph $G$, $V(G)$ is its vertex set and $E(G)$ its edge set.
For a given set of vertices $S \subseteq V$, $G[S]$ denotes the subgraph of $G$ induced by $S$.
We write $G/S$ for the contraction of $S$ in $G$, keeping all loops and parallel edges that are produced as a result (in particular, $G/S$ may not be a metric graph even if $G$ is).

Let  $G = (V,E)$ be an undirected graph with edge costs $c(e) \geq 0, e \in E$, for a nonempty subset $T \subseteq V$ we say a $T$-rooted spanning forest is a forest $F \subseteq E$ such that each component contains exactly one node in $T$.
\begin{lem}\label{lem:mcsf}
A $T$-rooted spanning forest $F$ is a minimum-cost $T$-rooted spanning forest if and only if for each edge $e = uv \in E\setminus F$, the following holds. If $u,v$ are in the same component of $F$, then $c(e) \geq c(e')$ for every edge $e'$ on the unique path between $u$ and $v$ in $F$.
Otherwise, if $u,v$ are not in the same component of $F$, then $c(e) \geq c(e')$ for every edge on the unique path between $t_u$ and $t_v$ in $F \cup \{e\}$ where $t_u, t_v \in T$ are the terminals the components including $u$ and $v$ are rooted at (respectively).
\end{lem}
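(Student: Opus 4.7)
The plan is to prove both directions by edge-exchange, leveraging the observation that $T$-rooted spanning forests of $G$ are in cost-preserving bijection with spanning trees of the contraction $G/T$ (where all terminals collapse to a single super-vertex $r$, and edges between two terminals become loops and are never used). Under this bijection, the stated local condition corresponds exactly to the standard minimum spanning tree optimality condition: a spanning tree is of minimum weight iff every non-tree edge has maximum cost on its fundamental cycle. So in principle one may invoke the classical MST characterization; below I outline a self-contained argument.

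For necessity, I would assume $F$ is minimum cost and derive a contradiction from a violated condition. If $e = uv \in E \setminus F$ violates Case~1 (same component), then some edge $e'$ on the $u$-$v$ path in $F$ has $c(e') > c(e)$; setting $F' = (F \setminus \{e'\}) \cup \{e\}$ yields a forest where the component containing the terminal $t$ has been split by $e'$ into two subtrees (one containing $t$, the other not) and then reconnected by $e$, so $F'$ is $T$-rooted and strictly cheaper. If $e$ violates Case~2 (different components), then some $e'$ on the $t_u$-$t_v$ path $P$ in $F \cup \{e\}$ has $c(e') > c(e)$; the strict inequality forces $e' \neq e$, hence $e' \in F$. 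Adding $e$ merges the components of $t_u$ and $t_v$ into one tree with two terminals, and removing $e'$ from $P$ splits this merged tree into two subtrees, one containing $t_u$ and the other $t_v$, so $F' = (F \cup \{e\}) \setminus \{e'\}$ is a $T$-rooted spanning forest of strictly lower cost, contradiction.

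For sufficiency, I would take any minimum-cost $T$-rooted spanning forest $F^*$ and show $c(F) \leq c(F^*)$ by induction on $|F \setminus F^*|$. If $F \neq F^*$ pick $e \in F \setminus F^*$ and examine $F^* \cup \{e\}$: either $e$ closes a cycle within a single component of $F^*$, or it joins two components into a tree with two terminals, producing (in the second case) a unique $t_u$-$t_v$ path through $e$. In either case, some edge $e' \in F^* \setminus F$ lies on the cycle or path (otherwise all its other edges would be in $F$ and, together with $e \in F$, yield a cycle in $F$ or a $t_u$-$t_v$ connection in $F$, contradicting that $F$ is a $T$-rooted spanning forest). Swapping $e$ for $e'$ produces another $T$-rooted spanning forest $F^{**}$; the argument of the previous paragraph shows it is again $T$-rooted. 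To bound the cost, apply the hypothesized local condition \emph{to $F$ with respect to the edge $e' \in E \setminus F$}: in whichever of Cases~1 or~2 applies to $e'$ relative to $F$, the associated path in $F$ or $F \cup \{e'\}$ must contain $e$ itself (this is what requires the most care — it follows because $e$ is the unique edge needed, together with $F \setminus \{e\}$, to realize the path/cycle structure dictated by $e'$), giving $c(e') \geq c(e)$ and hence $c(F^{**}) \leq c(F^*)$. Iterating drives $|F \setminus F^*|$ to zero while preserving optimality, so $c(F) \leq c(F^*)$. The main obstacle is precisely this last step: justifying cleanly that the $F$-side path associated with $e'$ under the hypothesis must route through $e$; it is most transparent by passing to $G/T$ and invoking MST optimality, where this becomes the standard statement that the fundamental cycles of $e$ in $F^*$ and of $e'$ in $F$ share the edge in question.
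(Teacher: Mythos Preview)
Your proposal is correct and takes exactly the paper's approach: the paper's entire proof is the observation that $T$-rooted spanning forests of $G$ correspond to spanning trees of $G/T$, so the lemma reduces to the classical cycle-rule characterization of minimum spanning trees, with the two cases in the statement arising according to whether or not the $F$-path between the endpoints of $e$ in $G/T$ passes through the contracted terminal vertex.

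One small caveat on your self-contained exchange argument for sufficiency: as you suspected, the step ``the associated path in $F$ \dots\ must contain $e$'' does not follow from an \emph{arbitrary} choice of $e' \in F^*\setminus F$ on the fundamental cycle of $e$ in $F^*$. (Concretely, in $G/T$ with $F=\{12,23,34\}$ and $F^*=\{13,14,23\}$, taking $e=34$ allows $e'=13$, whose $F$-path $1\text{--}2\text{--}3$ avoids $e$.) The standard fix is to choose $e'$ to additionally cross the cut that $e$ induces in $F$; such an $e'$ exists on the cycle since the cycle crosses that cut evenly and $e$ crosses it once. With that choice both needed properties hold simultaneously. Since you already defer this point back to the $G/T$ reduction, your overall argument stands.
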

\begin{proof}
First note that a set of edges $F$ is a minimum-cost $T$-rooted spanning forest if and only $F$ is a minimum-cost spanning tree in $G/T$. Furthermore, a spanning tree in a graph is a minimum-cost spanning tree if and only if every edge $uv$ not on the tree is at least as costly as every edge on the path of the spanning tree between $u$ and $v$. The two cases in the lemma statement for an edge $e \in E\setminus F$ arise from looking at the path between endpoints of $e$ in the minimum spanning tree obtained from $F$ by contracting $T$.
\end{proof}
For directed graphs, we define $T$-rooted directing spanning forests identically: all components contain a single node of $T$ and are directed trees oriented away from this node.

For a set $W \subset V$ of vertices in a directed graph $G = (V,A)$, $\delta^{out}(W) := \{(u,v) \in A \mid u \in W, v \in V\setminus W\}$
is the set of arcs leaving $W$, $\delta^{in}(W) := \{(v,u) \in A \mid u \in W, v \in V\setminus W\}$ is
the set of edges entering $W$, and $\delta(W) := \delta^{out}(W) \cup \delta^{in}(W)$.
We write $\delta^{out}(v)$, $\delta^{in}(v)$, and $\delta(v)$
as shorthand for $\delta^{out}(\{v\})$, $\delta^{in}(\{v\})$, and
$\delta(\{v\})$. 
We often deal with a vector/function $f$ over a finite set $X$. The notation $f(X)$ means $\sum_{x \in X} f(x)$ (or the sum over $f_x$ if $f$ is a vector indexed by $x$).

In the \emph{bidirected cut LP} (BDC) for minimum-cost
directed spanning trees rooted at $r \in V$, we minimize $\sum_{a \in A} c(a)x_{a}$ subject to the following constraints:
\begin{alignat}{2}
     x(\delta^{in}(U)) &\ge 1&&\qquad\forall~ \emptyset \subsetneq U \subseteq V\setminus \{r\}, \nonumber\\
     x &\ge 0. &&\nonumber
\end{alignat}
It is well known that (BDC) is integral~\cite{Edm67}, so the minimum-cost directed spanning tree has cost at most the cost of any feasible solution to (BDC).

We require a slight generalization of this for $T$-rooted directed spanning forests.
\begin{lem}\label{lem:mcdsf}
The polyhedron $\mathcal P_T := \{x \in \mathbb R^{A}_{\geq 0} : x(\delta^{in}(U)) \geq 1$ for all $\emptyset \subsetneq U \subseteq V\setminus T\}$ is integral. As a consequence, for any $x \in \mathcal P_T$ and any costs $c(a)$ over the arcs $a \in A$
we have that $\sum_{a \in A} c(a) \cdot x_a$ upper bounds the cost of a minimum-cost $T$-rooted directed spanning forest.
\end{lem}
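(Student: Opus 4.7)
The plan is to reduce to the standard bidirected cut LP of Edmonds by contracting $T$ to a single root. Let $D/T$ denote the directed graph obtained from $D = (V,A)$ by contracting $T$ into a single node $r$ (keeping any resulting loops and parallel arcs), with arc costs inherited from $D$. Exactly as in the proof of Lemma~\ref{lem:mcsf}, there is a cost-preserving bijection between $T$-rooted directed spanning forests in $D$ and $r$-arborescences in $D/T$ (loops at $r$, arising from arcs within $T$, are never used by an arborescence and may be ignored).

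The key observation is that for every $\emptyset \subsetneq U \subseteq V \setminus T$, the contraction induces a natural bijection between $\delta^{in}_D(U)$ and $\delta^{in}_{D/T}(U)$: arcs from $V \setminus T$ into $U$ survive unchanged, and arcs from some $t \in T$ into $U$ become arcs from $r$ into $U$. In particular, given $x \in \mathcal{P}_T$, transferring its values along this arc bijection and assigning $0$ to loops produces a vector $x'$ on the arcs of $D/T$ that is feasible for (BDC) rooted at $r$, since $x'(\delta^{in}_{D/T}(U)) = x(\delta^{in}_D(U)) \geq 1$ for every $\emptyset \subsetneq U \subseteq V(D/T) \setminus \{r\} = V\setminus T$, and $x' \geq 0$.

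By integrality of (BDC)~\cite{Edm67}, the minimum cost of an $r$-arborescence in $D/T$ is at most $\sum_a c'(a) x'_a \leq \sum_{a \in A} c(a)\, x_a$. Pulling back along the bijection yields a $T$-rooted directed spanning forest in $D$ of no greater cost, which proves the stated upper bound. For integrality of $\mathcal{P}_T$ itself, observe that the only coordinates of $x$ appearing in any constraint of $\mathcal{P}_T$ are those indexing arcs with head in $V \setminus T$; the remaining coordinates (arcs with head in $T$) are unconstrained nonnegative free variables and may be set to $0$ at any vertex. Restricted to the constrained coordinates, $\mathcal{P}_T$ is precisely (BDC) for $D/T$ rooted at $r$, so its extreme points are integer vectors that coincide with indicator vectors of $T$-rooted directed spanning forests in $D$.

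The only step that genuinely requires verification is the cut bijection $\delta^{in}_D(U) \leftrightarrow \delta^{in}_{D/T}(U)$ for $U \subseteq V \setminus T$; this is an easy consequence of $U$ being disjoint from $T$, so no arc entering $U$ is lost, duplicated, or turned into a loop by the contraction. Everything else is a direct invocation of Edmonds' integrality theorem.
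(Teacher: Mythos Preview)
Your proof is correct and follows essentially the same approach as the paper: contract $T$ to a single root and invoke Edmonds' integrality of the bidirected cut relaxation, using the natural correspondence between $T$-rooted directed spanning forests in $D$ and $r$-arborescences in $D/T$. You supply more detail than the paper's two-sentence proof, in particular making explicit the cut bijection $\delta^{in}_D(U) \leftrightarrow \delta^{in}_{D/T}(U)$ for $U \subseteq V\setminus T$ and the observation that arcs with head in $T$ are unconstrained and hence zero at any extreme point; the only minor overstatement is the final clause that extreme points ``coincide with indicator vectors of $T$-rooted directed spanning forests''---what you actually need (and what follows from Edmonds) is that they are integral and hence contain such a forest, which is also how the paper phrases its second sentence.
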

\begin{proof}
That the polytope is integral follows directly by considering $x$ in the contracted graph $G/T$ and using integrality of the single-root case and a natural corresponded between rooted directed spanning trees in $G/T$ and $T$-rooted directed spanning forests in $G$.
The second statement is because any integral point in $\mathcal P_T$ contains a $T$-rooted directed spanning forest as a subgraph.
\end{proof}


\section{A Bridge Lemma}\label{sec:bridge}
In this section, we let $G = (V,E)$ be an arbitrary connected and undirected graph with edge costs $c(e) \geq 0, e \in E$. Fix a nonempty subset $T \subseteq V$ and recall $c_T$ denotes the cost of a cheapest $T$-rooted spanning forest.

To prove Lemma \ref{thm:bridge}, we introduce further notation. For a set $S \subseteq V$ define $\drop_T(S) := c_T - c_{T\cup S}$ to be the amount by which the cost of the optimum $T$-rooted spanning forest decreases if we add $S$ to $T$. Note $\drop_T(S) \geq 0$ since we can always get a $(T \cup S)$-rooted spanning forest from a $T$-rooted spanning forest by iteratively deleting an edge on a path between two nodes in $T \cup S$. A {\em fractional covering} of $V\setminus T$ is a pair $(\mathcal S, z)$ where $z \in \mathbb R_{\geq 0}^\mathcal{S}$ satisfies $\sum_{S \ni v} z_S \geq 1$ for each $v \in V\setminus T$.

The following lemma is proven in nearly the same way as the Bridge Lemma in \cite{BGRS13}. Ours can be viewed as restatement that is convenient in our setting. For completeness, we include a proof.
\begin{lem}\label{lem:support}
Let $T \subseteq V$ be nonempty and $(\mathcal S, z)$ a fractional covering of $V\setminus T$. Then $c_T \leq \sum_{S \in \mathcal S} \drop_T(S) \cdot z_S$.
\end{lem}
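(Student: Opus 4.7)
The plan is to lower-bound $\drop_T(S)$ for each $S$ in the support of $z$ via a simple edge-removal construction, and then combine these bounds through a double-counting argument that exploits the fractional covering constraint. First I would fix a minimum-cost $T$-rooted spanning forest $F_T$ and orient each of its trees with its (unique) terminal as the root. For every $v \in V \setminus T$, let $p(v)$ denote the edge of $F_T$ incident to $v$ on the path from $v$ toward its root. The elementary but crucial observation is that $v \mapsto p(v)$ is a bijection from $V \setminus T$ onto $E(F_T)$, so $c_T = \sum_{v \in V \setminus T} c(p(v))$.

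The main structural claim is that for every $S \subseteq V \setminus T$, the subgraph $F_T \setminus \{p(v) : v \in S\}$ is a valid $(T \cup S)$-rooted spanning forest. I would verify this by arguing that for any $u \in V$, walking in $F_T$ from $u$ toward its original terminal, the first vertex $a \in S$ encountered (taking $a = u$ if $u \in S$) becomes the root of $u$'s new component, and if no such $a$ exists then $u$ remains attached to its original terminal. In every case, each resulting connected component is rooted at exactly one vertex of $T \cup S$, and the cost of the new forest is exactly $c_T - \sum_{v \in S} c(p(v))$. This yields $\drop_T(S) \ge \sum_{v \in S} c(p(v))$.

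Plugging this bound into a double sum concludes the proof:
\begin{align*}
\sum_{S \in \mathcal S} z_S \cdot \drop_T(S) \;\ge\; \sum_{S \in \mathcal S} z_S \sum_{v \in S} c(p(v)) \;=\; \sum_{v \in V \setminus T} c(p(v)) \sum_{S \ni v} z_S \;\ge\; \sum_{v \in V \setminus T} c(p(v)) \;=\; c_T,
\end{align*}
where the penultimate inequality is the fractional covering condition $\sum_{S \ni v} z_S \ge 1$ and the final equality is the bijection observation. I do not anticipate a major obstacle here; the only step requiring genuine care is the structural claim, which is purely combinatorial once the rooted-tree viewpoint is adopted, and everything else is bookkeeping. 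The use of $F_T$ being \emph{optimal} is actually not needed for the lower bound on $\drop_T(S)$ — the argument gives a feasible (not necessarily optimal) $(T\cup S)$-rooted forest — so the proof works with any $T$-rooted spanning forest whose cost equals $c_T$.
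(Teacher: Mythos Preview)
Your proof is correct and takes a genuinely simpler route than the paper's. The paper follows the original Bridge-Lemma template: it introduces auxiliary ``bridge'' costs $c'(u,v)$ equal to the most expensive edge on the $u$--$v$ path in $F_T$, constructs for each $S$ a directed forest $F'_S$ with $c'(F'_S)=\drop_T(S)$ exactly (this step uses Lemma~\ref{lem:mcsf} and hence optimality of $F_T$), aggregates the $F'_S$ into a fractional point $y$ feasible for the bidirected-cut relaxation $\mathcal P_T$ under costs $c'$, invokes integrality of that polytope (Lemma~\ref{lem:mcdsf}), and finally proves $c_T=c'_T$ via an exchange argument. You bypass all of this with the observation that the parent-edge map $v\mapsto p(v)$ is a bijection $V\setminus T\to E(F_T)$, so deleting $\{p(v):v\in S\}$ already yields a feasible $(T\cup S)$-rooted spanning forest; the resulting inequality $\drop_T(S)\ge\sum_{v\in S}c(p(v))$ suffices for the double-counting. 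What your argument buys is elementarity and self-containment---no auxiliary metric, no LP integrality; what the paper's argument buys is the tighter identity $\drop_T(S)=c'(F'_S)$ and a direct parallel with the Steiner-tree Bridge Lemma, where the tree need not be spanning and no parent-edge bijection onto the full edge set is available. One minor quibble: your closing remark that optimality of $F_T$ is ``not needed'' is a bit misleading---the structural claim indeed holds for any $T$-rooted spanning forest, but the inequality $\drop_T(S)\ge\sum_{v\in S}c(p(v))$ you derive from it, and the final equality $\sum_v c(p(v))=c_T$, both require $c(F_T)=c_T$, i.e.\ optimality.
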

\begin{proof}
Let $F \subseteq E$ be a minimum-cost $T$-rooted spanning forest.
Build a complete directed graph $G' = (V,A)$ with arc costs $c'(u,v)$ for distinct $u,v \in V$ given as follows. If $u$ and $v$ lie in the same component in the forest $(V,F)$, set $c'(u,v)$ to be the cost of the most expensive edge along the unique $u$-$v$ path in $F$. Otherwise, let $c'(u,v) = \infty$. Note $c'(u,v) = c(u,v)$ for each $uv \in F$.

For each $S \in \mathcal S$, we describe a directed forest $F'_S$ in $G'$. First, let $C \subseteq V$ be a component of the forest $(V,F)$. We know $C \cap (S \cup T)$ contains a unique node $t_C \in T$. Orient the edges $F$ in this component away from $t_C$ and let $F'_S$ be the directed forest rooted at $T$ consisting of arcs $(x,y)$ such that there is a directed path from $x$ to $y$ in this orientation of $F$ of that does not pass through any other node of $C \cap (S \cup T)$. See Figure \ref{fig:orient} for an illustration. 

\begin{figure}[h]
\begin{center}
\includegraphics[width=10cm]{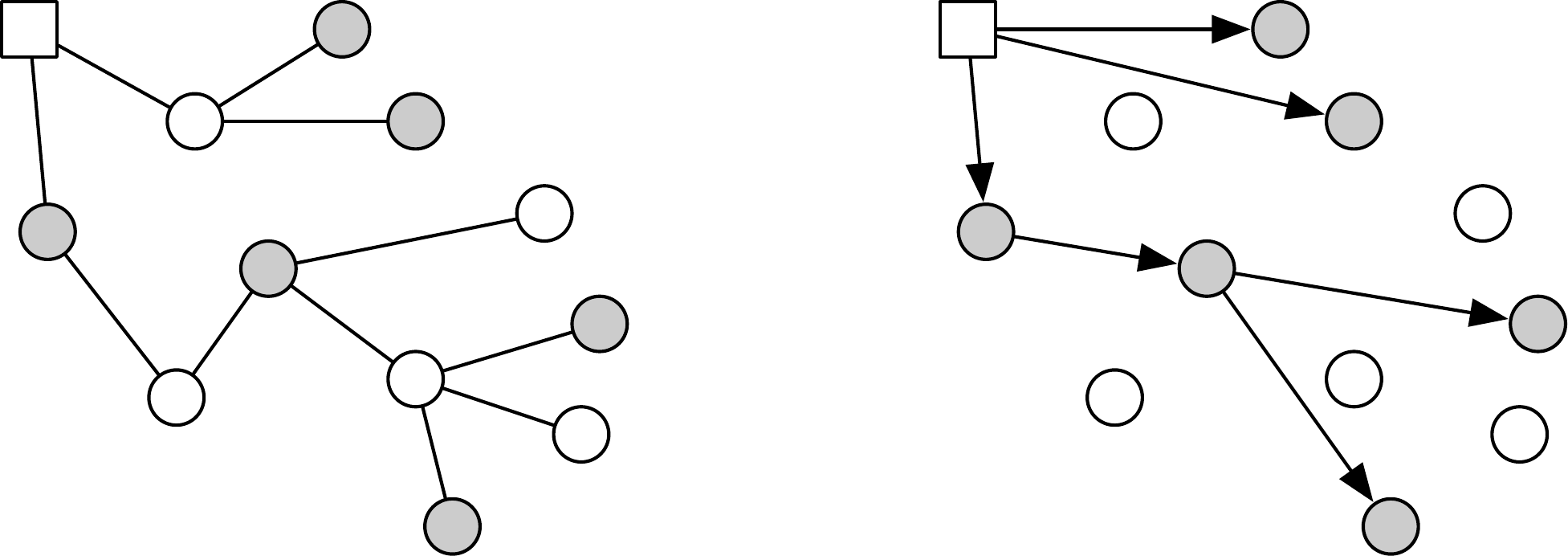}
\end{center}
\caption{Left: A component $C$ of the forest $(V,F)$. The square node is $t_C$ and the nodes in $C \cap S$ are shaded. Right: The corresponding directed tree $F'_S$. The $c'$-cost of each directed edge $(u,v)$ is the largest $c$-cost of edges on the $u$-$v$ path in $F$.}
\label{fig:orient}
\end{figure}

We pause to note that if we delete an edge $e$ with cost $c'(u,v)$ in the path between $u$ and $v$ in $F$ for each arc $(u,v) \in F'_S$, we obtain a minimum-cost $(T \cup S)$-rooted spanning forest in $G$. That is, it is straightforward to verify edges not in the constructed $(T \cup S)$-rooted spanning forest satisfy the properties of Lemma~\ref{lem:mcsf}. That is, $\drop_T(S) = c'(F'_S)$.

Next we define a point $y$ over directed arcs of $A$ by setting
\[ y_a = \sum_{\substack{S \in \mathcal S \\ a \in F'_S}} z_S. \]
That is, $a$ is the extent to which it is generated in the construction of $F'_S$ for various sets $S \in \mathcal S$ in the cover of $V\setminus T$. Observe $\sum_a c'(a) \cdot y_a = \sum_{S \in \mathcal S} z_S \cdot \drop_T(S)$ because $\drop_T(S) = c'(F'_S)$.

We claim for each $\emptyset \subsetneq U \subseteq V\setminus T$ that $y(\delta^{in}(U)) \geq 1$. Consider any $v \in U$, because $(\mathcal S, z)$ is a fractional cover we have $\sum_{S \ni v} z_S \geq 1$. Now, $v$ would be reachable from some $T$ in each set $F'_S$ for $S \in \mathcal S$ with $v \in S$. So $\delta^{in}(U)$ contains an edge along the $T$-$v$ path in $F'_S$ and this edge is counted with weight $z_S$ in $y(\delta^{in}(U))$. Summing over all $S$ containing $v$ finishes this claim.

Let $c'_T$ denote the minimum-cost $T$-rooted spanning forest if we use costs\footnote{While the cost function $c'$ was defined for a directed graph, it is symmetric, so it makes sense to use these costs for an undirected graph.} $c'$. Trivially, $c'_T$ is the same cost as the minimum-cost $T$-rooted spanning forest in $G'$ using costs $c'$.
By Lemma \ref{lem:mcdsf} we have $c'_T$ is at most $\sum_{a \in A} c'(a) \cdot y_a = \sum_{S \in \mathcal S} \drop_T(S) \cdot z_S$.

Our proof concludes by showing $c_T = c'_T$. To that end, let $F^*$ be an optimum $T$-rooted spanning forest under costs $c'$ and among all such rooted spanning forests pick one that maximizes $|F \cap F^*|$. We claim $F = F^*$. This would suffice since $c(e) = c'(e)$ for any $e \in F$ by definition of $c'$.

Otherwise there must be some $uv \in F^*\setminus F$. Since $uv \notin F$ yet $u,v$ lie in the same component of $(V,F)$ (as $F^*$ includes no $\infty$-cost edges) then optimality of $F$ shows $c(u,v) \geq c(e)$ for all edges $e$ on the path $P$ between $u$ and $v$ in $(V,F)$. Fix $e \in P$ to be any edge that bridges the components containing $u$ and $v$ respectively in $(V, F^*\setminus \{uv\})$. Note $c'(e) \leq c'(u,v)$ since $e$ is an edge of $P$ and $c'(uv)$, by definition, is the $c$-cost of the most expensive edge on $P$ and $c(e') = c'(e')$ for any edge $e' \in F$. Thus, $c'(e) \leq c'(u,v)$ means $F^*\setminus \{uv\}\cup\{e\}$ is yet another minimum $c'$-cost $T$-rooted spanning forest contradicting maximality of $|F \cap F^*|$. Thus, $F = F^*$.
\end{proof}

Using this, we complete the proof of our main supporting result which is really just a reinterpretation of Lemma \ref{lem:support} in the setting where 
we randomly sample a subset of $V$ such that nodes in $V-T$ have a uniform lower bound on the probability of being selected.
\begin{proof}[Proof of Theorem \ref{thm:bridge}]
Clearly $c_{T \cup S} \leq c_T$ for any $S \subseteq V$ since we can drop some edges from an optimal $T$-rooted spanning forest to get an $(S \cup T$)-rooted spanning forest. So the result is trivial for $\gamma = 1$. We now consider the case $\gamma < 1$.

Define a fractional set covering $(\mathcal S, z)$ of $V-T$ using the sets $\mathcal S$ supported by the distribution where we set $z_S = \frac{1}{1-\gamma} {\bf Pr}[S]$. By Lemma~\ref{lem:support},
\[ c_T \leq \sum_S  z_S \cdot {\rm drop}_T(S) = \frac{1}{1-\gamma} \cdot \sum_S {\bf Pr}[S] \cdot (c_T - c_{T\cup S}) \]
Rearranging,
\[ \frac{1}{1 - \gamma} {\bf E} [c_{T \cup S}] = \frac{1}{1-\gamma} \cdot \sum_S {\bf Pr}[S] \cdot c_{T \cup S} \leq \left(\frac{1}{1-\gamma} \cdot \sum_S {\bf Pr}[S] \cdot c_T\right) - c_T = \left(\frac{1}{1-\gamma} - 1\right) \cdot c_T = \frac{\gamma}{1-\gamma} \cdot c_T. \]
Multiplying through by $1-\gamma$ finishes the proof.
\end{proof}


\section{Approximating OTSP}\label{sec:approx-OTSP}

Consider an optimal extreme point solution $(x, z)$ to \eqref{lp:ktspp} where we use $(s_i,t_i) = (o_i,o_{i+1})$ for $1 \leq i \leq k$ using wrap-around indexing (i.e. $o_{k+1} := o_1$). We view $t_i$ and $s_{i+1}$ as different copies of $o_i$ for the purpose of solving the LP, but once we sample branchings below we will view them as the same node again.

Using Theorem \ref{thm:bj}, for each $1 \leq i \leq k$ let $B_{i,1}, \ldots, B_{i,q_i}$ be the branchings obtained from decomposing the $r$-preflow with corresponding weights $\lambda_{i,1}, \ldots, \lambda_{i,q_i} \geq 0$ summing to 1 for each $i$.

\begin{algorithm}
\caption{Ordered TSP}
\begin{algorithmic}\label{alg:otsp}
\STATE For each $1 \leq i \leq k$, sample a single branching $B_i$ from the branchings $B_{i,1}, \ldots, B_{i, q_i}$ with probability given by the $\lambda_{i_j}$-values.
\STATE Let $T' = \cup_i V(B_i)$ be the nodes covered by the $B_i$ including all terminals.
\STATE Let $F_{T'}$, the cheapest $T'$-forest rooted at covered nodes.
\STATE Apply Lemma \ref{lem:parity} to convert $\cup_i E(B_i) \cup F_{T'}$ to a feasible OTSP solution.
\end{algorithmic}
\end{algorithm}

We first analyze the expected costs of the branchings and $F_{T'}$.
\begin{lem}
For each $i$, ${\bf E}[c(B_i)] \leq \sum_a c(a) \cdot x_{i,a}$. Overall, the expected total cost of all branchings is at most $OPT_{LP}$.
\end{lem}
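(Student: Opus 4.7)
The proof is a direct computation using the properties of the Bang--Jensen/Swamy--Post decomposition guaranteed by Theorem \ref{thm:bj}. First I would write the expected cost of $B_i$ by conditioning on which branching is sampled, namely
\[
{\bf E}[c(B_i)] = \sum_{j=1}^{q_i} \lambda_{i,j} \cdot c(B_{i,j}) = \sum_{j=1}^{q_i} \lambda_{i,j} \sum_{a \in A(B_{i,j})} c(a).
\]
Then I would swap the order of summation to group by arcs rather than branchings:
\[
{\bf E}[c(B_i)] = \sum_{a \in A(G')} c(a) \sum_{j : a \in A(B_{i,j})} \lambda_{i,j}.
\]
At this point the first bound in Theorem \ref{thm:bj} yields $\sum_{j : a \in A(B_{i,j})} \lambda_{i,j} \leq x_{i,a}$, and since $c(a) \geq 0$ the per-arc inequality extends to the sum, giving ${\bf E}[c(B_i)] \leq \sum_a c(a) \cdot x_{i,a}$, which is the first claim.

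For the second claim, I would simply sum the first inequality over $i \in [k]$ and invoke linearity of expectation on the left-hand side, obtaining
\[
{\bf E}\Bigl[\sum_{i \in [k]} c(B_i)\Bigr] \leq \sum_{i \in [k]} \sum_{a \in A(G')} c(a) \cdot x_{i,a} = OPT_{LP},
\]
where the final equality is the definition of the objective of \eqref{lp:ktspp} evaluated at the optimal extreme point $(x,z)$.

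There is no real obstacle here; the statement is essentially a bookkeeping consequence of the decomposition theorem. The only thing worth flagging is that we need the branchings to be sampled \emph{independently} across different $i$ for linearity of expectation in the sum to give the clean second bound, but that independence is already implicit in the algorithm description (each coordinate samples from its own distribution). Nothing beyond that is needed.
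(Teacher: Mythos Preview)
Your proof is correct and follows exactly the paper's approach: compute ${\bf Pr}[a \in B_i] = \sum_{j : a \in A(B_{i,j})} \lambda_{i,j} \leq x_{i,a}$ from Theorem~\ref{thm:bj}, apply linearity of expectation over arcs, then sum over $i$. One small correction to your final remark: linearity of expectation does \emph{not} require independence across the $B_i$, so the second bound holds regardless of how the samplings are coupled; independence is needed elsewhere (Lemma~\ref{lem:cover}) but not here.
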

\begin{proof}
This is just because the $\lambda_i$-weight of branchings $B_{i,1}, \ldots, B_{i,q_i}$ containing any particular arc $a$ is at most $x_{i,a}$. That is, ${\bf Pr}[a \in B_i] \leq x_{i,a}$ so by linearity of expectation ${\bf E}[c(B_i)] = \sum_a c(a) \cdot {\bf Pr}[a \in B_i] \leq \sum_a c(a) \cdot x_{i,a}$.
Finally, note $OPT_{LP} = \sum_{i \in [k]} \sum_a c(a) \cdot x_{i,a}$.
\end{proof}

\begin{lem}\label{lem:cover}
${\bf E}[c(F_{T'})] \leq e^{-1} \cdot OPT_{LP}$
\end{lem}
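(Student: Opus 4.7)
The plan is to apply Theorem~\ref{thm:bridge} with $T$ being the set of terminals $\{o_1, \ldots, o_k\}$ and $S := T' \setminus T$ being the random set of non-terminals covered by the sampled branchings. So I need to (i) bound $\gamma$, the maximum probability that any $v \in V \setminus T$ is missed by $S$, and (ii) bound $c_T$ in terms of $OPT_{LP}$.

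For (i), fix $v \in V \setminus T$. By Theorem~\ref{thm:bj}, for each $i$ the probability $\mathbf{Pr}[v \in B_i]$ equals $\sum_{j : v \in V(B_{i,j})} \lambda_{i,j} \geq z_{i,v}$. Since the branchings $B_1, \ldots, B_k$ are sampled independently across $i$, we have
\[
\mathbf{Pr}[v \notin T'] \;=\; \prod_{i=1}^{k} \mathbf{Pr}[v \notin B_i] \;\leq\; \prod_{i=1}^{k}(1 - z_{i,v}) \;\leq\; \exp\!\Big(-\sum_{i=1}^{k} z_{i,v}\Big) \;=\; e^{-1},
\]
using $1-x \leq e^{-x}$ and the LP constraint $\sum_i z_{i,v} = 1$. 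So $\gamma = e^{-1}$ is an admissible parameter for Theorem~\ref{thm:bridge}.

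For (ii), I claim that $y := \sum_{i=1}^k x_i$ lies in the polyhedron $\mathcal P_T$ of Lemma~\ref{lem:mcdsf}. Indeed, for any $\emptyset \subsetneq U \subseteq V\setminus T$ and any fixed $v \in U$, the LP constraint $x_i(\delta^{in}(U)) \geq z_{i,v}$ applies (since $U \cap \{s_1,\ldots,s_k\} = \emptyset$ because $s_i \in T$), and summing over $i$ yields $y(\delta^{in}(U)) \geq \sum_i z_{i,v} = 1$. By Lemma~\ref{lem:mcdsf}, the minimum-cost $T$-rooted directed spanning forest in the bidirected graph has cost at most $\sum_a c(a)\cdot y_a = OPT_{LP}$; since undirected and bidirected $T$-rooted spanning forests have equal minimum cost (any undirected forest can be oriented away from the root in each component and vice versa), this gives $c_T \leq OPT_{LP}$.

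Combining, Theorem~\ref{thm:bridge} gives $\mathbf{E}[c(F_{T'})] = \mathbf{E}[c_{T \cup S}] \leq e^{-1} \cdot c_T \leq e^{-1} \cdot OPT_{LP}$, as required. The only step that requires any care is (ii) — verifying that summing the LP arc variables over $i$ produces a feasible point for the spanning-forest cut polytope of Lemma~\ref{lem:mcdsf}, and that minimum undirected and bidirected $T$-rooted spanning forest costs coincide; the remainder is a direct application of Theorems~\ref{thm:bj} and~\ref{thm:bridge} together with the standard inequality $1-x\le e^{-x}$.
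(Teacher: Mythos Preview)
Your proof is correct and follows essentially the same approach as the paper's: both verify that each non-terminal is missed with probability at most $e^{-1}$, apply Theorem~\ref{thm:bridge}, and then bound $c_T \leq OPT_{LP}$ by checking that $\sum_i x_i$ is feasible for the cut polytope of Lemma~\ref{lem:mcdsf}. The only cosmetic difference is that you use $1-x \le e^{-x}$ to bound the product, whereas the paper uses the AM--GM inequality to get $(1-1/k)^k \le e^{-1}$; both are equally valid here.
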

\begin{proof}
For each $v \in V\setminus T$ and each $i \in [k]$, ${\bf Pr}[v \in V(B_i)] \geq z_{i,v}$ by properties of the decomposition of $x_i$ mentioned in Theorem \ref{thm:bj}. Since these branchings are sampled independently, a vertex $v$ is not in any branching in the joint distribution over branchings
with probability at most \[ \prod_{i \in [k]} (1-z_{i,v}) \leq (1 - \sum_i z_{i,v}/k)^k = (1-1/k)^k \leq e^{-1} \]
where we have used the arithmetic/geometric mean inequality in the first bound.

By Theorem \ref{thm:bridge}, we have ${\bf E}[c(F_{T'})] = {\bf E}[c_{T'}] \leq e^{-1} \cdot c_T$. To conclude, we show $c_T \leq OPT_{LP}$. Consider the values $x_a := \sum_{i \in [k]} x_{i,a}$ for arcs $a \in A$. Consider any $\emptyset \subsetneq U \subseteq V-T$ and consider any $v \in U$. We have $x(\delta^{in}(U)) = \sum_{i \in [k]} x_i(\delta^{in}(U)) \geq \sum_{i \in [k]} z_{i,v} = 1$ so $x$ is a fractional solution in the bidirected cut relaxation for the minimum-cost $\{s_1, \ldots, s_k\}$-rooted spanning forest LP. So by Lemma \ref{lem:mcdsf}, $c_T \leq \sum_a c(a) \cdot x_a = OPT_{LP}$ as required.
\end{proof}

\begin{lem}\label{lem:parity}
Given branchings $B_1, \ldots, B_k$ where each $B_i$ is an $o_i$-branching that includes node $o_{i+1}$ and given a $T'$-rooted spanning forest $F_{T'}$, we can compute a feasible OTSP solution with cost at most $\sum_i c(B_i) + c(F_{U-V}) + OPT_{LP}/2$ in polynomial time.
\end{lem}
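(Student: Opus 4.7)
I would follow a Christofides-style strategy adapted to respect the OTSP ordering. Let $H_0 := \bigcup_{i=1}^k E(B_i) \cup F_{T'}$ be the multigraph obtained from the sampled branchings (forgetting arc directions) together with the rooted forest. Since each $B_i$ connects $o_i$ to $o_{i+1}$ and $F_{T'}$ attaches every remaining vertex to $T' \supseteq T$, $H_0$ is connected and spanning. Let $T_{\text{odd}}$ be the set of odd-degree vertices of $H_0$, compute a minimum-cost $T_{\text{odd}}$-join $J$, and consider the Eulerian multigraph $H' := H_0 \cup J$. The target Hamiltonian cycle will be obtained as a shortcut of a carefully chosen Eulerian circuit of $H'$.

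\textbf{Bounding the parity-correction cost.} Define the symmetric vector $y_{\{u,v\}} := \sum_{i=1}^k (x_{i,(u,v)} + x_{i,(v,u)})$, so $c(y) = OPT_{LP}$. I would show $y(\delta(U)) \geq 2$ for every nontrivial cut $\emptyset \neq U \subsetneq V$; by Edmonds--Johnson, $y/2$ then lies in the dominant of the $T_{\text{odd}}$-join polytope and gives $c(J) \leq c(y)/2 = OPT_{LP}/2$. The case analysis uses the cyclic structure of the OTSP pairs $(o_i, o_{i+1})$: walking around the pair cycle $o_1, o_2, \ldots, o_k, o_1$ and counting transitions between $U$ and $V \setminus U$ shows that the number of separated pairs is always even. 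If zero pairs are separated, then $U$ or $V \setminus U$ avoids $T$, and for any $v$ in the terminal-free side the LP constraints $x_i(\delta^{in}(U)) \geq z_{i,v}$ combined with $\sum_i z_{i,v} = 1$ yield both $x(\delta^{in}(U)) \geq 1$ and $x(\delta^{out}(U)) \geq 1$, hence $y(\delta(U)) \geq 2$. If at least two pairs are separated, flow conservation summed over $U$ gives $x_i(\delta(U)) \geq 1$ for each such pair, again yielding $y(\delta(U)) \geq 2$.

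\textbf{Extracting an ordered Hamiltonian cycle.} For each $i$, the branching $B_i$ contains a unique directed $o_i$-to-$o_{i+1}$ path $P_i$. The concatenation $W_0 := P_1 \circ P_2 \circ \cdots \circ P_k$ is a closed walk in $H'$ that visits $o_1, o_2, \ldots, o_k, o_1$ along its backbone. Removing the edges of $W_0$ from $H'$ leaves a multigraph in which every vertex has even degree (since both $H'$ and $W_0$ do), so it decomposes into Eulerian components; each nontrivial such component must share a vertex with $W_0$ by connectivity of $H'$. I would splice an Eulerian closed walk of each leftover component into $W_0$ as a detour, choosing the splice point to be the \emph{latest} terminal (in the cyclic order starting at $o_1$) contained in the component, if any. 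This guarantees that the first visit to each $o_j$ still occurs along the backbone in the prescribed order. Shortcutting the resulting Eulerian circuit to first visits yields a feasible OTSP tour of cost at most $c(H') = \sum_i c(B_i) + c(F_{T'}) + c(J) \leq \sum_i c(B_i) + c(F_{T'}) + OPT_{LP}/2$.

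\textbf{Main obstacle.} The delicate part is the ordering in the last step: an individual $P_i$ may itself pass through some $o_j$ with $j \notin \{i, i+1\}$, so the backbone $W_0$ is not automatically order-respecting. The fix is to first metric-shortcut each $P_i$ to skip any interior terminal visit (this only lowers its cost), treating any such skipped $o_j$ as an extra vertex to be picked up by the later $B_{j-1}, B_j$ or by a detour; the ``latest-terminal'' insertion rule for detours must then be applied with respect to the same cyclic order. Verifying that these two shortcutting/insertion choices commute correctly and preserve the cost bound is where I expect the argument to demand the most care.
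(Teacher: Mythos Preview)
Your approach coincides with the paper's: because the backbone $W_0 = P_1 \circ \cdots \circ P_k$ is a closed walk and hence even-degree everywhere, the odd-degree set of your $H_0$ equals that of $F' := \bigl(\bigcup_i E(B_i)\setminus E(P_i)\bigr) \cup F_{T'}$, which is precisely the multiset the paper parity-corrects before grafting the resulting Eulerian pieces back onto the $o_i$--$o_{i+1}$ paths; your two-case cut argument for $c(J)\le OPT_{LP}/2$ is likewise the paper's. You are right to flag the ordering subtlety (the paper is terse there), and your fix is workable, though a cleaner resolution than pre-shortcutting the $P_i$ and enforcing a latest-terminal insertion rule is simply, when shortcutting the final Eulerian walk to a Hamiltonian cycle, to retain for each $o_j$ its designated occurrence at the $P_{j-1}/P_j$ junction rather than its first occurrence.
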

\begin{proof}
    This is essentially the standard parity correction found in the Christifides/Serdyukov algorithm using Wolsey's analysis~\cite{Wol80} for TSP except we have to worry about the resulting tour being a feasible OTSP solution.

For each $B_i$, let $P'_i$ the $o_i$-$o_{i+1}$ tour
in $B_i$. Consider the disjoint union $F'$ of the sets $E(B_1) \setminus E(P'_1), E(B'_2) \setminus E(P'_2), \ldots, E(B'_k) \setminus E(P'_k), F_{T'}$ (i.e. keep all parallel copies of edges). Each component of $F'$ has a node lying on at least one of the paths $P'_i$ since each component in $E(B_i) - E(P'_i)$ has a node on $P'_i$ (as $B_i$ is a branching including $o_i$ and $o_{i+1}$) and each component of $F_{T'}$ is rooted at a node in $T'$. Let $\mathcal O$ be the odd-degree nodes in this multiset $F'$.

Build the following vector $\hat{x} \in \mathbb R^E$ by setting $\hat{x}_{uv} = \sum_{i=1}^k (x_{i, (u,v)} + x_{i, (v,u)})$. We claim $\hat{x}/2$ is a fractional $\mathcal O$-join, this follows by showing $\hat{x}(\delta(S)) \geq 2$ for any cut $\emptyset \subsetneq S \subseteq V$.
If $0 < |S \cap \{o_1, \ldots, o_k\}| < k$, it must be that $o_i \in S, o_{i+1} \notin S$ and $o_j \notin S, o_{j+1} \in S$ for some $i, j$, i.e. consider the point when the wraparound sequence $v_1, \ldots, v_k$ leaves $S$ and when it re-enters $S$.
Since $x_i$ is one unit of $o_i$-$o_{i+1}$ flow then $x_i(\delta^{out}(S)) \geq 1$. Similarly, $x_j(\delta^{in}(S)) \geq 1$ since $x_j$ is one unit of $o_j$-$o_{j+1}$ flow. Thus, $\hat{x}(\delta(S)) \geq x_i(\delta^{out}(S)) + x_i(\delta^{in}(S)) \geq 2$.

Now suppose $S$ does not separate terminals and, by replacing $S$ with $V\setminus S$ if necessary, that $T \subseteq S$. Let $v \in V\setminus S$ be any node. For each $i$, the cut constraints ensure $x_i(\delta^{out}(S)) \geq z_{i,v}$. Since $x_i$ is one unit of $v_i$-$v_{i+1}$ flow and since $v_i, v_{i+1} \in T$ then $x_i(\delta^{in}(S)) = x_i(\delta^{out}(S)) \geq z_{i,v}$ as well. Summing over all $i$ shows
\[ \hat{x}(\delta(U)) = \sum_i x_i(\delta^{out}(S)) + x_i(\delta^{in}(S)) \geq \sum_i z_{i,v} + z_{i,v} = 2. \]

By integrality of the $\mathcal O$-join polyhedron, the minimum-cost $\mathcal O$-join $J$ has $c(J) \leq OPT_{LP}/2$. Adding $J$ to $F'$ yields a multiset of edges that has even degree at each node, so each component can be turned into a cycle by shortcutting without increasing the cost. Each such component includes a node spanned by at least one $P'_i$ since that was true even before adding $J$, so we can graft each cycle into some $P'_i$.
\end{proof}

Putting all these bounds together, we see the expected cost of the solution is at most $(1 + e^{-1} + 1/2)\cdot OPT_{LP} = (3/2 + e^{-1})\cdot OPT_{LP}$ as required.


\section{Approximating $k$-Person TSP Path}

We present this in two steps, the first being a warmup algorithm that achieves an improved approximation for \kTSPP but is not as good as our final algorithm.

Let $\Delta := \sum_{i \in [k]} c(s_it_i)$. Since each $x_i$ constitutes one unit of $s_i$-$t_i$ flow, we have $\sum_a c(a) \cdot x_{i,a} \geq c(s_it_i)$. Summing over all $i$ shows $\Delta \leq OPT_{LP}$. Finally, let $\tau \in [0,1]$ be such that $(1-\tau) \cdot OPT_{LP} = \sum_{i \in [k]} c(s_it_i)$.

\subsection{Warmup Algorithm}

As with \OTSP, we consider an extreme point optimal solution $(x,z)$ to \eqref{lp:ktspp}. For each $i \in [k]$, let $B_{i,1}, \ldots, B_{i, q_i}$ be branchings and $\lambda_{i,1}, \ldots, \lambda_{i,q_i}$ as in Theorem \ref{thm:bj}.

\begin{algorithm}
\caption{$k$-Person TSP Path -- Warmup Algorithm}
\begin{algorithmic}\label{alg:ktspp1}
\STATE For each $1 \leq i \leq k$, sample a single branching $B_i$ from $B_{i,1}, \ldots, B_{i, q_i}$ with corresponding probabilities given by the $\lambda_{i,j}$-values.
\STATE For each $1 \leq i \leq k$, let $P_i$ be an $s_i$-$t_i$ path obtained by doubling all edges not on the $s_i$-$t_i$ path in $B_i$ and shortcutting.
\STATE Let $T' = \cup_i V(P_i)$ be the covered nodes including the original terminals $T$.
\STATE Let $F_{T'}$, the cheapest $T'$ forest rooted at covered nodes.
\STATE Double the edges of $F_{T'}$ and shortcut to get simple cycles, graft these cycles into the paths $P_1, \ldots, P_k$ to get the final solution.
\end{algorithmic}
\end{algorithm}
In the final step, each component of $F_{T'}$ includes a single node $t' \in T'$. The resulting cycle can then be grafted in to any path $P_i$ that covers $t'$.

The final cost is at most $\sum_{i \in [k]} c(P_i) + 2 \cdot c(F_{T'})$. We analyze the expected costs of these portions of the solution. For brevity, let $c(x_i)$ denote $\sum_a c(a) \cdot x_{i,a}$ denote the contribution of the $s_i$-$t_i$ flow $x_i$ to $OPT_{LP}$.
\begin{lem}\label{lem:pathcost1}
For each $i \in [k]$, ${\bf E}[c(P_i)] \leq 2 \cdot c(x_i) - c(s_i,t_i)$.
\end{lem}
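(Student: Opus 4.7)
The plan is to decompose the path cost $c(P_i)$ into two contributions from $B_i$: the edges on the unique $s_i$-$t_i$ path inside the branching (call it $Q_i$), which are used exactly once, and the remaining edges of $B_i$, which are doubled. Since shortcutting can only decrease cost in a metric, this gives the clean deterministic bound
\[ c(P_i) \;\le\; c(Q_i) \;+\; 2\bigl(c(B_i) - c(Q_i)\bigr) \;=\; 2\,c(B_i) \;-\; c(Q_i). \]
Here I am using that the multigraph formed by $Q_i$ plus two copies of each remaining edge of $B_i$ has even degree everywhere except at $s_i$ and $t_i$, hence admits an Eulerian $s_i$-$t_i$ walk that can be shortcut.

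Next, I would take expectations and use the earlier observation that ${\bf E}[c(B_i)] \le \sum_a c(a)\,x_{i,a} = c(x_i)$, which follows directly from ${\bf Pr}[a \in B_i] \le x_{i,a}$ and linearity of expectation. For the $-c(Q_i)$ term I would use the metric property of $c$: since $Q_i$ is an $s_i$-$t_i$ path in the complete graph on the metric, the triangle inequality gives $c(Q_i) \ge c(s_i t_i)$ deterministically, and therefore ${\bf E}[c(Q_i)] \ge c(s_i t_i)$.

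Combining the two bounds yields
\[ {\bf E}[c(P_i)] \;\le\; 2\,{\bf E}[c(B_i)] \;-\; {\bf E}[c(Q_i)] \;\le\; 2\,c(x_i) \;-\; c(s_i t_i), \]
which is the claim.

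There is no serious obstacle here; the only point requiring minor care is ensuring that $Q_i$ is well-defined and lies in $B_i$. This holds because $B_i$ is an $s_i$-branching that contains $t_i$ (a property guaranteed because $x_i$ sends one unit of flow into $t_i$, so $z_{i,t_i}$ is effectively $1$ in the Bang-Jensen decomposition), so there is a unique directed $s_i$-to-$t_i$ path in $B_i$ whose arcs are exactly the edges used once rather than doubled.
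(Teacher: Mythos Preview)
Your proof is correct and follows essentially the same approach as the paper: both obtain the deterministic bound $c(P_i)\le 2\,c(B_i)-c(s_it_i)$ from doubling the non-path edges and shortcutting, then take expectations using ${\bf Pr}[a\in B_i]\le x_{i,a}$. You are simply more explicit than the paper in naming the intermediate path $Q_i$ and invoking $c(Q_i)\ge c(s_it_i)$ via the triangle inequality, and in justifying that $t_i$ lies in every sampled branching.
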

\begin{proof}
Fix such an $i$. We have $c(P_i) \leq 2 \cdot c(B_i) - c(s_i,t_i)$ because $P_{i}$ is obtained by doubling all edges of $B_i$ except those on an $s_i$-$t_i$ path and then shortcutting (which does not increase the cost, by the triangle inequality).
Thus, 
\begin{align*}
{\bf E}[P_i] &\leq \sum_j \lambda_{i,j} \cdot (2 \cdot c(B_{i,j}) - c(s_i,t_i)) \\
&=  2 \cdot \sum_j \lambda_{i,j} \cdot c(B_{i,j}) - \sum_j \lambda_{i,j} \cdot c(s_i,t_i) \\
&\leq  2 \cdot c(x_i) - c(s_i,t_i)
\end{align*}
The last bound is because $\sum_j \lambda_{i,j} = 1$ and because each arc $a$ appears on at most a total weight of $x_{i,a}$ of branchings, by Theorem \ref{thm:bj}.
\end{proof}

\noindent
The proof of the following is identical to the proof of Lemma \ref{lem:cover} since the path $P_i$ spans the same set of nodes as the branching $B_i$ and the branchings were sampled the same way as in Algorithm \eqref{alg:otsp}.
\begin{lem}
${\bf E}[c(F_{T'})] \leq e^{-1} \cdot OPT_{LP}$
\end{lem}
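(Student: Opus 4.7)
The plan is to follow the template of Lemma~\ref{lem:cover} essentially verbatim, because in Algorithm~\ref{alg:ktspp1} the vertex set spanned by the path $P_i$ coincides with $V(B_i)$, and the branchings $B_i$ are sampled in precisely the same way as in Algorithm~\ref{alg:otsp}. So the ``covered'' set $T' = \bigcup_i V(P_i)$ has the same distribution as $\bigcup_i V(B_i)$ in the \OTSP setting, and the argument carries over.

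First, I would show that each non-terminal $v \in V\setminus T$ survives uncovered with low probability. By Theorem~\ref{thm:bj}, the branching decomposition of the $s_i$-preflow $x_i$ has $\mathbf{Pr}[v \in V(B_i)] \geq z_{i,v}$, and since $P_i$ spans $V(B_i)$, also $\mathbf{Pr}[v \in V(P_i)] \geq z_{i,v}$. The branchings for different indices $i$ are sampled independently, and the LP constraint $\sum_{i \in [k]} z_{i,v} = 1$ together with the arithmetic–geometric mean inequality gives
\[
  \mathbf{Pr}[v \notin T'] \;\leq\; \prod_{i\in[k]} (1 - z_{i,v}) \;\leq\; \left(1 - \tfrac{1}{k}\right)^k \;\leq\; e^{-1}.
\]
This lets me apply Theorem~\ref{thm:bridge} with $\gamma = e^{-1}$ to the randomly sampled augmentation set $T' \setminus T$, obtaining $\mathbf{E}[c_{T'}] \leq e^{-1}\cdot c_T$. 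Since $F_{T'}$ is by definition a minimum-cost $T'$-rooted spanning forest, $c(F_{T'}) = c_{T'}$.

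It remains to bound $c_T \leq OPT_{LP}$, and this is the only step that needs its own little verification (exactly as in Lemma~\ref{lem:cover}). I would set $x_a := \sum_{i\in[k]} x_{i,a}$ in the bidirected complete digraph and check that $x$ lies in the polyhedron $\mathcal P_T$ of Lemma~\ref{lem:mcdsf}: for any nonempty $U \subseteq V\setminus T$, pick any $v \in U$ and use the flow cut constraints of \eqref{lp:ktspp} together with $\sum_i z_{i,v} = 1$ to get $x(\delta^{in}(U)) \geq \sum_i z_{i,v} = 1$. Integrality of $\mathcal P_T$ then gives $c_T \leq \sum_a c(a)\cdot x_a = OPT_{LP}$, which combined with the previous inequality finishes the lemma.

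There is really no hard step here: the entire content is the translation of the OTSP covering analysis to the kTSPP warmup setting via the observation that $V(P_i) = V(B_i)$. The only place worth being careful is the reduction $c_T \leq OPT_{LP}$, which uses the bidirected cut LP; I would just cite Lemma~\ref{lem:mcdsf} and Lemma~\ref{lem:cover} rather than reproduce the computation.
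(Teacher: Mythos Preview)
Your proposal is correct and matches the paper's approach exactly: the paper's proof is a one-line reference back to Lemma~\ref{lem:cover}, justified by the observation that $V(P_i)=V(B_i)$ and that the branchings are sampled identically in both algorithms. Your write-up simply unpacks that reference in full detail, including the $c_T \leq OPT_{LP}$ step via Lemma~\ref{lem:mcdsf}, so nothing is missing or different.
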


Putting these bounds together shows:
\begin{align*}
{\bf E}\left[\sum_{i \in [k]} c(P_i) + 2 \cdot c(F_{T'})\right] &=  \sum_{i \in [k]} {\bf E}[P_i] + 2 \cdot c(F_{T'}) \\
&\leq  \sum_i (2 \cdot c(x_i)-c(s_i,t_i)) + 2  e^{-1} \cdot OPT_{LP} \\
&=  2 \cdot OPT_{LP} - \Delta + 2 e^{-1} \cdot OPT_{LP} \\
&=  2 \cdot OPT_{LP} - (1-\tau) \cdot OPT_{LP} + 2 e^{-1} \cdot OPT_{LP} \\
&=  (1+\tau + 2 e^{-1}) \cdot OPT_{LP}
\end{align*}
Note for any $\tau \in [0,1]$ the approximation guarantee is at most $2 + 2 e^{-1} < 2.7358$ and this worst case is attained only for large values of $\tau$.

On the other hand, for large $\tau$ a simpler algorithm performs much better.
\begin{lem}\label{lem:simple}
Consider the algorithm that doubles a minimum-cost $T$-rooted spanning forest and then adds the direct edges $s_it_i$ to get $s_i$-$t_i$ paths spanning all nodes (after shortcutting). The cost of this solution is at most $(3-\tau) \cdot OPT_{LP}$.
\end{lem}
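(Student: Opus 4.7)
\medskip

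\noindent\textbf{Proof plan for Lemma~\ref{lem:simple}.}
The plan is to verify feasibility of the described algorithm and then bound its cost by the sum of two straightforward contributions: twice the minimum-cost $T$-rooted spanning forest plus the total length of the direct edges.

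First I would check that the described construction is a feasible \kTSPP solution. Consider the multigraph $H$ obtained by doubling the edges of the minimum-cost $T$-rooted spanning forest $F_T$ and then adding the $k$ direct edges $\{s_it_i : i \in [k]\}$. Since each component of $F_T$ contains exactly one node of $T$, the endpoints $s_i$ and $t_i$ lie in different components; adding the edge $s_it_i$ merges the components rooted at $s_i$ and $t_i$ into one component of $H$. The doubled forest edges contribute even degree at every vertex, so the only odd-degree vertices of $H$ are precisely the $2k$ endpoints in $T$, and in each merged component the unique pair of odd-degree vertices is some $(s_i, t_i)$. Therefore each such component admits an Eulerian $s_i$-$t_i$ walk; shortcutting by the triangle inequality yields a simple $s_i$-$t_i$ path of no larger cost. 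Every $v \in V$ lies in some component of $F_T$, hence on at least one such path, so the collection is a feasible solution (after further shortcutting, if desired, so each vertex lies on exactly one path).

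Next I would bound the cost. By construction, the total cost is at most $2 \cdot c_T + \Delta$. To bound $c_T$, I would reuse the argument already given in the proof of Lemma~\ref{lem:cover}: the vector $x_a := \sum_{i \in [k]} x_{i,a}$ formed from the optimal LP solution satisfies $x(\delta^{in}(U)) \geq \sum_{i \in [k]} z_{i,v} = 1$ for every nonempty $U \subseteq V\setminus T$ and any choice of $v \in U$, and hence lies in the polyhedron $\mathcal P_T$ from Lemma~\ref{lem:mcdsf}. That lemma then gives
\[
c_T \;\leq\; \sum_{a \in A} c(a)\cdot x_a \;=\; \sum_{i \in [k]} \sum_a c(a)\cdot x_{i,a} \;=\; OPT_{LP}.
\]

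Finally, combining these bounds with the definition $\Delta = (1-\tau)\cdot OPT_{LP}$ yields
\[
2 \cdot c_T + \Delta \;\leq\; 2\cdot OPT_{LP} + (1-\tau)\cdot OPT_{LP} \;=\; (3-\tau)\cdot OPT_{LP},
\]
as claimed. There is no real technical obstacle; the only subtlety is checking feasibility of the combined structure (the parity argument for Eulerian $s_i$-$t_i$ walks inside each merged component), and the cost bound is then immediate by recycling the LP argument already established in Lemma~\ref{lem:cover}.
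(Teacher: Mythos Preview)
Your proposal is correct and follows essentially the same approach as the paper: bound the solution cost by $2c_T + \Delta$, invoke $c_T \le OPT_{LP}$ (already established in the proof of Lemma~\ref{lem:cover}), and substitute $\Delta = (1-\tau)\cdot OPT_{LP}$. You supply more detail on feasibility via the parity/Eulerian-walk argument, whereas the paper's proof is a two-line sketch, but the underlying reasoning is the same.
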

\begin{proof}
The cost of the cycles is at most $2 \cdot c_T \leq 2 \cdot OPT_{LP}$ and direct edge costs are $\Delta = (1-\tau) \cdot OPT_{LP}$.
\end{proof}

Finally, if we output the better of these two algorithms the approximation guarantee is no worse than $\max_{\tau \in [0,1]} \min\{1 + \tau + 2 \cdot e^{-1}, 3-\tau\}$. The worst case occurs when $\tau = 1-e^{-1}$ which yields a notably better approximation guarantee
of $2+e^{-1} < 2.368$.


\subsection{An Improved \kTSPP Approximation}

Intuitively, the Algorithm \ref{alg:ktspp1} is better for small $\tau$ because turning the branchings into paths results in only a small increase in the cost (in expectation) as most of the cost of the branchings is usually concentrated on their $s_i$-$t_i$ paths.
The Algorithm described in Lemma \ref{lem:simple} does better for large $\tau$ because the direct $s_i$-$t_i$ connections are cheap. We can do better with a single algorithm that interpolates between these two ideas before grafting in the remaining nodes.

That is, we will sample an $s_i$-$t_i$ path $P_i$ by flipping a coin: in one case we will sample a branching $B_i$ and turn it into a path like the $(2+2e^{-1})$-approximation in Algorithm \eqref{alg:ktspp1}, in the other case we will just add the direct edge $s_it_i$ as in $(3-\tau)$-approximation mentioned in Lemma \ref{lem:simple}. In either case, we then double a minimum-cost $T'$-rooted spanning forest to graft in the remaining nodes. The bias of the coin will depend on $\tau$, which measures the gap between $\Delta$ and $OPT_{LP}$. The full algorithm is described in Algorithm \ref{alg:ktspp2}.

\begin{algorithm}
\caption{$k$-Person TSP Path -- Final Algorithm}
\begin{algorithmic}\label{alg:ktspp2}
\STATE Let $\gamma = \min\{1, \ln \tau^{-1}\}$. \COMMENT{Using $\gamma = 1$ if $\tau = 0$}
\STATE For each $1 \leq i \leq k$, sample a single branching $B_i$ from $B_{i,1}, \ldots, B_{i, q_i}$ with corresponding probabilities given by the $\lambda_i$-values.
\STATE Independently for each $1 \leq i \leq k$, with probability $\gamma$ let $P_i$ be an $s_i$-$t_i$ path obtained by doubling all edges not on the $s_i$-$t_i$ path in $B_i$ and shortcutting, otherwise let $P_i$ just be the single edge $s_it_i$.
\STATE Let $T' = \cup_i V(P_i)$ be the covered nodes including the original terminals $T$.
\STATE Let $F_{T'}$, the cheapest $T'$ forest rooted at covered nodes.
\STATE Double the edges of $F_{T'}$ and shortcut to get simple cycles, graft these cycles into the paths $P_1, \ldots, P_k$ to get the final solution.
\end{algorithmic}
\end{algorithm}

\begin{lem}\label{lem:pathcost2}
${\bf E}\left[\sum_{i \in [k]} c(P_i)\right] \leq (1 - \tau + 2\gamma\tau) \cdot OPT_{LP}.$
\end{lem}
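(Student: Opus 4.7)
The plan is to compute $\mathbf{E}[c(P_i)]$ for each individual $i$ by conditioning on the outcome of the coin flip for index $i$, then sum the resulting bounds via linearity of expectation and simplify using the identities $\Delta = (1-\tau)\cdot OPT_{LP}$ and $\sum_{i} c(x_i) = OPT_{LP}$. The coin flips are independent across $i$, but we do not need independence here since linearity of expectation handles the sum regardless.

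Concretely, fix an index $i$. With probability $\gamma$, the path $P_i$ is built from the sampled branching $B_i$ by doubling all non-$(s_i,t_i)$-path edges and shortcutting; this is exactly the construction from Algorithm \ref{alg:ktspp1}, so the proof of Lemma \ref{lem:pathcost1} yields $\mathbf{E}[c(P_i) \mid \text{branching case}] \le 2 c(x_i) - c(s_it_i)$. With probability $1-\gamma$, the path $P_i$ is just the single edge $s_it_i$ of cost $c(s_it_i)$. Conditioning on the outcome of the $i$-th coin,
\begin{align*}
\mathbf{E}[c(P_i)] &\le \gamma \cdot \bigl(2 c(x_i) - c(s_it_i)\bigr) + (1-\gamma)\cdot c(s_it_i)\\
&= 2\gamma\, c(x_i) + (1 - 2\gamma)\, c(s_it_i).
\end{align*}

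Summing over $i \in [k]$ and using $\sum_i c(x_i) = OPT_{LP}$ together with $\sum_i c(s_it_i) = \Delta = (1-\tau)\cdot OPT_{LP}$,
\begin{align*}
\mathbf{E}\!\left[\sum_{i \in [k]} c(P_i)\right] &\le 2\gamma \cdot OPT_{LP} + (1-2\gamma)(1-\tau)\cdot OPT_{LP}\\
&= \bigl(2\gamma + (1-\tau) - 2\gamma(1-\tau)\bigr)\cdot OPT_{LP}\\
&= (1 - \tau + 2\gamma\tau)\cdot OPT_{LP},
\end{align*}
which is exactly the claimed bound.

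There is essentially no technical obstacle: the only subtlety is making sure that the branching-case bound from Lemma \ref{lem:pathcost1} can be invoked conditionally (it can, since the bound there is obtained by averaging $2 c(B_{i,j}) - c(s_it_i)$ against the $\lambda_{i,j}$'s, and the distribution of $B_i$ conditional on the branching case is exactly that mixture), and that the coefficient $1-2\gamma$ in front of $c(s_it_i)$ may be negative when $\gamma > 1/2$ — but this causes no issue because Lemma \ref{lem:pathcost1} is a valid upper bound regardless of sign, and the algebra collapses cleanly using $\Delta = (1-\tau)\cdot OPT_{LP}$.
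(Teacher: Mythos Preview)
Your proof is correct and follows essentially the same approach as the paper: condition on the coin flip for each $i$, invoke the bound from Lemma~\ref{lem:pathcost1} in the branching case, take the direct edge cost in the other case, then sum and substitute $\sum_i c(x_i)=OPT_{LP}$ and $\Delta=(1-\tau)\cdot OPT_{LP}$. Your remark about the sign of $1-2\gamma$ is unnecessary, since the only inequality used is multiplied by the nonnegative factor $\gamma$ before any rearrangement, so no sign issue ever arises.
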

\begin{proof}
Conditioned on the event that $P_i$ is obtained by turning $B_i$ into a path, the expected cost of $P_i$ would be at most $2 \cdot c(x_i) - c(s_it_i)$ as in the proof of Lemma \ref{lem:pathcost1}. Removing this conditioning (i.e. considering the entire distribution over $P_i$),
the expected cost is then at most $\gamma \cdot (2 \cdot c(x_i) - c(s_i,t_i)) + (1-\gamma) \cdot c(s_i,t_i)$.

Summing over all $i$ shows
\begin{align*}
{\bf E}\left[\sum_{i \in [k]} c(P_i)\right] &\leq  \sum_{i \in [k]} \left(\gamma \cdot (2 \cdot c(x_i) - c(s_it_i)) + (1-\gamma) \cdot c(s_i,t_i)\right) \\
&=  \sum_{i \in [k]} (2\gamma \cdot c(x_i) + (1-2\gamma) \cdot c(s_i,t_i)) \\
&=  2\gamma \cdot OPT_{LP} + (1-2\gamma) \cdot \Delta \\
&=  2\gamma \cdot OPT_{LP} + (1-2\gamma) \cdot (1-\tau) \cdot OPT_{LP} \\
&=  (1 - \tau + 2\gamma\tau) \cdot OPT_{LP}
\end{align*}
\end{proof}

\begin{lem}
${\bf E}[c(F_{T'})] \leq e^{-\gamma}.$
\end{lem}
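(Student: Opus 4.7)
The plan is to follow essentially the same template as the proof of Lemma~\ref{lem:cover}, but adjusted for the fact that each branching $B_i$ is now only used with probability $\gamma$ (and with the remaining probability $1-\gamma$ the path $P_i$ is just the single edge $s_it_i$ and therefore covers nothing in $V\setminus T$). Accordingly, I expect the statement to be $\mathbf{E}[c(F_{T'})] \leq e^{-\gamma} \cdot OPT_{LP}$.

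First I would fix any $v \in V\setminus T$ and compute a lower bound on $\Pr[v \in T']$. By Theorem~\ref{thm:bj}, for each $i$ we have $\Pr[v \in V(B_i)] \geq z_{i,v}$. Conditioned on the coin outcome, $v$ is placed into $V(P_i)$ precisely when the coin (biased to select the branching variant with probability $\gamma$) chooses the branching and $v$ happens to lie in the sampled $B_i$; otherwise $P_i = s_it_i$ covers no non-terminal. Since the coin and the branching choice for each $i$ are independent of everything done for the other indices, a straightforward AM--GM computation gives
\begin{equation*}
\Pr[v \notin T'] \leq \prod_{i \in [k]} \bigl(1 - \gamma \cdot z_{i,v}\bigr) \leq \left(1 - \frac{\gamma}{k}\sum_{i \in [k]} z_{i,v}\right)^k = \left(1 - \frac{\gamma}{k}\right)^k \leq e^{-\gamma},
\end{equation*}
using the LP constraint $\sum_i z_{i,v} = 1$ in the penultimate step.

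Next I would apply the Bridge Lemma (Theorem~\ref{thm:bridge}), taking the underlying distribution $\mu$ to be the distribution of the random set $T' \setminus T \subseteq V\setminus T$ produced by the algorithm. The bound above shows that $\mu$ satisfies the hypothesis with parameter $e^{-\gamma}$, and hence $\mathbf{E}[c_{T \cup (T'\setminus T)}] = \mathbf{E}[c_{T'}] \leq e^{-\gamma} \cdot c_T$.

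Finally I would bound $c_T$ by $OPT_{LP}$ exactly as in the proof of Lemma~\ref{lem:cover}: set $x_a := \sum_{i \in [k]} x_{i,a}$ and observe that for any $\emptyset \subsetneq U \subseteq V\setminus T$ and any $v \in U$, the LP cut constraints yield $x(\delta^{in}(U)) \geq \sum_i z_{i,v} = 1$, so $x$ is feasible for the bidirected cut relaxation of the minimum-cost $T$-rooted directed spanning forest. Lemma~\ref{lem:mcdsf} then gives $c_T \leq \sum_a c(a) x_a = OPT_{LP}$, completing the chain $\mathbf{E}[c(F_{T'})] \leq e^{-\gamma} \cdot c_T \leq e^{-\gamma} \cdot OPT_{LP}$. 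No step looks like a real obstacle; the only subtlety is remembering that independence across $i$ (which is what makes the product bound valid) now holds for both the branching sample and the coin flip.
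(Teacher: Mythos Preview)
Your proposal is correct and follows essentially the same argument as the paper: bound $\Pr[v \notin T'] \leq \prod_i(1-\gamma z_{i,v}) \leq (1-\gamma/k)^k \leq e^{-\gamma}$ via AM--GM and independence, then invoke Theorem~\ref{thm:bridge} and the bound $c_T \leq OPT_{LP}$ from Lemma~\ref{lem:cover}. You are also right that the statement should read ${\bf E}[c(F_{T'})] \leq e^{-\gamma}\cdot OPT_{LP}$; the missing factor is a typo in the paper.
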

\begin{proof}
As in the proof of Lemma \ref{lem:cover}, it suffices to show a node $v \in V\setminus T$ does not lie on any path with probability at most $e^{-\gamma}$.
The probability $v$ lies on $P_i$ is at least $\gamma \cdot z_{i,v}$ since we use the branching $B_i$ to get $P_i$ with probability $\gamma$, and $B_i$ contains $v$ with probability at least $z_{i,v}$. Thus,
\[ {\bf Pr}[v \notin T'] \leq \prod_{i \in [k]}(1-\gamma \cdot z_{i,v}) \leq (1-\gamma \sum_{i \in [k]} \cdot z_{i,v}/k)^k = (1-\gamma/k)^k \leq e^{-\gamma}. \]
\end{proof}

The cost of the final solution is $2 \cdot c(F_{T'}) + \sum_{i \in [k]} c(P_i)$ which, in expectation, is at most $(1 - \tau + 2\gamma\tau + 2e^{-\gamma}) \cdot OPT_{LP}$. If $\tau < e^{-1}$, then $\gamma = 1$
and this guarantee is at most $1 + 3 \cdot e^{-1} < 2.104$. Otherwise, if $\tau > e^{-1}$ then $\gamma = \ln \tau^{-1}$ and the guarantee becomes $1 + \tau - 2 \tau \ln \tau$.
The worst case occurs at $\tau = e^{-1/2}$ with an approximation guarantee of $1 + 2 \cdot e^{-1/2} < 2.2131$. So for all cases of $\tau \in [0,1]$ we have the approximation guarantee being at most $2.2131$.


\section{Conclusion}

We have demonstrated how an adaptation of the Bridge Lemma along with the decomposition of preflows by Bang-Jensen et al. can be used to obtain improved vehicle routing approximations.
Intuitively, this approach seems most effective when the goal is to minimize the total length of a collection of paths that collectively cover all nodes. So one naturally wonders if we can get improvements for \kTSPP when $s_i = t_i$ for all 
pairs (i.e. find rooted cycles covering all nodes). Beating the trivial 2-approximation when $k$ is part of the input (i.e. not fixed) is a compelling open problem. We have been unable to adapt our approach to this setting, but perhaps a variation
of our ideas would succeed.


\section*{Acknowledgments}
We thank Ruben Hoeksma, Nidia Obscura Acosta, José A. Soto, Kevin Schewior and José Verschae for helpful discussions. Progress on this project was made at the 27th Aussois Combinatorial Optimization Workshop, we thank the organizers.

\bibliographystyle{plain}
\bibliography{refs}

\end{document}